\newtheorem{lemma}{Lemma}
\newtheorem{theorem}{Theorem}
\newtheorem{corollary}{Corollary}
\newtheorem{condition}{Condition}
\newtheorem{remark}{Remark}
\theoremstyle{definition}
\newtheorem{definition}{Definition}
\title{Derandomization of quantum algorithm  for   triangle finding }
\author{ Guanzhong Li\thanks{Email: ligzh9@mail2.sysu.edu.cn}, Lvzhou Li\thanks{Email: lilvzh@mail.sysu.edu.cn (corresponding author)}\\
\small{{\it Institute of Quantum Computing and Software,}}\\
\small{{\it School of Computer Science and Engineering,}}\\
\small {{\it  Sun Yat-sen University, Guangzhou 510006, China}}}
\date{\today }
\begin{document}

\maketitle

\begin{abstract} Derandomization is the process of taking a randomized algorithm and turning it into a deterministic algorithm, which has attracted great attention in classical computing. In quantum computing, it is challenging and intriguing to derandomize quantum algorithms, due to  the inherent randomness of quantum mechanics. The significance of derandomizing quantum algorithms lies not only in theoretically proving
that the success probability can essentially be 1 without sacrificing quantum speedups, but
also in experimentally improving the success rate when the algorithm is implemented on a
real quantum computer.

In this paper, we focus on derandomizing quanmtum algorithms for the triangle sum problem (including the famous triangle finding problem as a special case), which asks to find a triangle in an edge-weighted graph with $n$ vertices, such that its edges sum up to a given weight.
We show that when the graph is promised to contain at most one target triangle, there exists a deterministic quantum algorithm  that either finds the triangle if it exists or outputs ``no triangle'' if none exists. It makes $O(n^{9/7})$ queries to the edge weight matrix oracle, and thus has the same complexity with the state-of-art bounded-error quantum algorithm.  
To achieve this derandomization, we make full use  several techniques:
nested quantum walks with quantum data structure, deterministic quantum search with adjustable parameters,
and dimensional reduction of quantum walk search on Johnson graph.
\end{abstract}

\section{Introduction}
Randomized algorithms play an important role in computer science, as they can be significantly efficient in some choice of computational resource for a lot of basic computational problems, such as time for primality testing~\cite{miller76, Rabin80, SS77}, space for undirected s-t connectivity~\cite{AKLLR79} and circuit depth for perfect matching~\cite{KUW86}.
Since the  polynomial-time deterministic algorithm for primality testing~\cite{AKS04} was proposed in 2004, the study of derandomization, i.e. the question of whether it's possible to come up with efficient deterministic versions of randomized algorithms, has been attracting attention from the academic community, see for instance, Refs.~\cite{Kab02, Russ06, algebraic19, hard21, minimal23}.
Indeed, a lot of exciting works have been dedicated to derandomizing concrete randomized algorithms, including the aforementioned primality testing~\cite{AKS04}, undirected s-t connectivity~\cite{Rein08} and perfect matching~\cite{ST17}.
There are also entire books on derandomization, see for instance, Refs.~\cite{pairwise06, pseudo12, quanti22}.

Because of the inherent randomness of quantum mechanics, most of the existing quantum algorithms are randomized, i.e.,  have a probability of failure~\cite{shor, Grover, Montanaro2016}.  Derandomizing  quantum algorithms seems difficult, with
only  few quantum algorithms having been successfully derandomized (succeed with certainty), such as  deterministic quantum search~\cite{amplitude_amplification,arbi_phase,Long,Roy22,exact} and deterministic quantum algorithms for Simon's problem~\cite{exact_simon} (and its generalization~\cite{GSP}), element distinctness problem~\cite{eedp} and the welded tree problem~\cite{welded_tree23}.
The significance of derandomizing quantum algorithms lies not only in theoretically proving that the success probability  can essentially be 1 without sacrificing quantum speedups, but also in experimentally improving the success rate when the algorithm is implemented on a real quantum computer. Thus, it is intriguing to find more quantum algorithms that allows derandomization. In this paper we will focus on derandomizing quantum algorithms for triangle finding and its generalization, an important and extensively studied problem in quantum computing.

\subsection{Triangle finding and its generalization}
The triangle finding problem has been extensively studied in quantum computing.
It aims to find a triangle in an unknown graph with $n$ vertices, making as few queries as possible to its adjacency matrix given as a black box.
Compared to the classical query complexity of $\Theta(n^2)$, the quantum query complexity of the problem has gradually improved from the trivial $O(n^{3/2})$ using Grover search on triples of the graph's vertices, to the state-of-the-art $O(n^{5/4})$ using extended learning graph~\cite{triangle_extended}.

The first improvement over $O(n^{3/2})$ was given by Buhrman et al.~\cite{Buhrman_element} for sparse graph with $m = o(n^2)$ edges, as they presented a quantum algorithm for the triangle finding problem with query complexity $O(n+\sqrt{nm})$ using amplitude amplification.
Using combinatorial ideas and amplitude amplification, Szegedy~\cite{Szegedy_triangle} (see also~\cite{MagniezSS07}) showed how to solve the problem with query complexity $\widetilde{O}(n^{10/7})$, where $\widetilde{O}(\cdot)$ hides logarithmic factors.
Magniez et al.~\cite{MagniezSS07} then utilized quantum walk search on Johnson graphs, which was originally used to construct an optimal quantum algorithm for the element distinctness problem~\cite{Ambainis07}, to obtain a more efficient algorithm with $\widetilde{O}(n^{13/10})$ queries.
Belovs~\cite{Belovs_learning_graph} introduced the learning graph framework and, as the first application of this framework, used it to improve the quantum query complexity of triangle finding to $O(n^{35/27})$.
Lee et al.~\cite{lee2012improved} then further improved the query complexity to $O(n^{9/7})$, again using learning graph.
Finally, Le Gall~\cite{LeGall_triangle} gave a $\widetilde{O}(n^{5/4})$ quantum algorithm, which utilizes combinatorial structure of the problem, quantum walk search on Johnson graph, and variable time quantum search.
The logarithmic factors in $\widetilde{O}(n^{5/4})$ was later removed using extended learning graphs by Carette et al.~\cite{triangle_extended}.

As can be seen from the above progress, each improvement in the query complexity of triangle finding problem has brought deeper insight into the problem or stimulating new algorithmic technique (See~\cite{Jeffery_review} for a more detailed review).
However, the gap between $O(n^{5/4})$ and $\Omega(n)$ is still open.
At the same time, all the above quantum algorithms are bounded-error, that is, have a probability of failure.

In the above process, Belovs and Rosmanis~\cite{Belovs_power} found that the $O(n^{9/7})$ algorithm by Lee et al.~\cite{lee2012improved} based on non-adaptive learning graph can in fact solve a more general problem --- the triangle sum problem, in which the underlying graph is now edge-weighted and the goal is to find a target triangle whose edges sum to a given value.
They also showed that the algorithm is almost optimal by giving a matching lower bound of $\Omega(n^{9/7}/\sqrt{\log n})$.
Jeffery et al.~\cite{Jeffery_triangle} then proposed a new nested quantum walk with quantum data structures, which is based on the MNRS framework~\cite{MNRS}, and as an application they adapted Lee et al.'s algorithm~\cite{lee2012improved} to a quantum-walk-based algorithm.
However, in order to reduce errors when nesting the bounded-error subroutines, the algorithm makes $\widetilde{O}(n^{9/7})$ queries with additional log factors.

\subsection{Our contribution}
In this paper, we propose a deterministic quantum algorithm for the triangle sum problem (and thus also for the famous triangle finding problem), based on derandomization of the quantum-walk-based algorithm by Jeffery et al.~\cite{Jeffery_triangle}.
Our algorithm has the same $O(n^{9/7})$ query complexity with the state-of-the-art bounded-error quantum algorithm by Lee et al.~\cite{lee2012improved}, but we require an additional promise that the graph has at most one target triangle.
Apart from nested quantum walks with quantum data structures~\cite{Jeffery_triangle}, our algorithm also utilizes deterministic quantum search with adjustable parameters~\cite{exact} (see Section~\ref{subsec:pre_adjustable} and especially Lemma~\ref{lem:beta_fixed}), and a technique to reduce the dimension of invariant subspaces of quantum walk search on Johnson graph (see Section~\ref{subsec:pre_vertex} and especially Lemma~\ref{lem:eedp}), which has also found application in designing a deterministic quantum algorithm for the element distinctness problem~\cite{eedp}.
We think our algorithm has the following significance:
\begin{enumerate}
    \item It's the first deterministic quantum algorithm for the triangle sum (and also triangle finding)  problem making $O(n^{9/7})$ queries, and provides a new example of deranomization of quantum algorithms.
    
    \item It shows the usefulness of the techniques  being utilized, and it's likely that more applications will be found.
\end{enumerate}

\textbf{Formal statement of the triangle sum problem.}
Consider an undirected and weighted simple graph $G$ with $n$ vertices, specified by its edge weight matrix $A \in [M]^{n \times n}$, where $M$ is a positive integer and $[M]:=\{0,1,\cdots,M-1\}$.
The edge weight matrix $A$ can be accessed through a quantum black box (oracle) $O$ whose effect on the computational basis is as follows:
\begin{equation}\label{eq:oracle_triangle_sum}
    O\ket{i,j} \ket{b} \mapsto \ket{i,j}\ket{b\oplus A_{i,j}},
\end{equation}
where $(i,j) \in [n] \times [n]$ encodes an edge of $G$ to be queried, and $A_{i,j} \in [M]$ is the corresponding weight. 
Suppose the value $A_{i,j}$ is stored in $m$ qubits, then $\oplus$ denotes bit-wise XOR and $O^2=I$.

\begin{definition}[the triangle sum problem]\label{prob:triangle_sum}
    Given $d\in [M]$, find three vertices $a,b,c \in [n]$ in the graph $G$ such that
    \begin{equation}\label{eq:def_sum_problem}
        A_{a,b} +A_{b,c} +A_{c,a} = d \mod{M},
    \end{equation}
    making as few queries to the oracle $O$ as possible.
\end{definition}
The triangle sum \textit{promised} problem has an additional promise that if such a target triangle $\triangle abc$ exists in $G$, there is \textit{only} one.

In this paper we obtain the following result.
\begin{theorem}\label{thm:main}
    There is a deterministic quantum algorithm that solves the triangle sum promised problem with certainty and making $O(n^{9/7})$ queries.
\end{theorem}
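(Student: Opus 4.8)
The plan is to start from the bounded-error nested quantum walk of Jeffery et al.~\cite{Jeffery_triangle}, which realizes the learning-graph algorithm of Lee et al.~\cite{lee2012improved} as a sequence of nested quantum walk searches on Johnson graphs --- with a Grover search at the innermost level and auxiliary quantum data structures holding the already-queried edges --- and to replace every randomized primitive in it by an \emph{exact} counterpart. The extra ``at most one target triangle'' promise is what makes this possible: it pins down, at each nesting level, the number of marked vertices of the corresponding Johnson-graph walk (for instance, at the outer level the marked $r$-subsets are exactly those containing a fixed endpoint of the unique triangle, a number depending only on $n$ and $r$), so that all relevant transition angles become explicitly computable rather than merely bounded.

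Two ingredients do the heavy lifting. First, Lemma~\ref{lem:eedp} (dimensional reduction): for a quantum walk search on a Johnson graph with a unique marked vertex, the walk operator together with the start state and the marked state lives in an invariant subspace of dimension $O(1)$, on which the walk acts as a product of two reflections and hence as a rotation by an angle that is an explicit function of $n$ and the subset size; replacing the $\pi$-phase reflections of the standard MNRS walk by \emph{adjustable} phase rotations, and choosing the number of walk steps and the phases appropriately, one can steer the start state exactly onto the marked state. Second, Lemma~\ref{lem:beta_fixed} (deterministic search with adjustable parameters): when the ``checking'' reflection fed into a given level is not perfect but a unitary that, on the relevant invariant subspace, rotates by a known angle different from $\pi$, the adjustable-parameter search absorbs this ``angle defect'' and still yields an exact search. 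This is precisely the situation produced by nesting, since the checker at each level is the (already exact, but rotation-rather-than-reflection) walk from the level below.

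The construction then proceeds bottom-up. At the innermost level, Grover search is replaced by exact amplitude amplification via Lemma~\ref{lem:beta_fixed}. Moving up one level at a time, the current level's walk has a known marked fraction (from the promise), a known rotation angle on its reduced invariant subspace (from Lemma~\ref{lem:eedp}), and a checker given by the exact rotation delivered from below; feeding these into Lemma~\ref{lem:beta_fixed} produces an exact walk search at this level, whose own invariant-subspace action is again a computable rotation to be passed upward. Iterating through all levels gives a unitary that, when a target triangle exists, deterministically outputs it on measurement, and when none exists leaves the registers in a state orthogonal to the ``found'' subspace, so that a final measurement (or a dedicated flag qubit) reports ``no triangle'' with certainty. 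Because no error reduction, majority voting, or probability-boosting amplification is used, the query and data-structure costs at every level match those of~\cite{Jeffery_triangle} up to constants; choosing the subset sizes across the levels as in~\cite{lee2012improved,Jeffery_triangle} optimizes the total to $O(n^{9/7})$ queries, with no logarithmic overhead.

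The \textbf{main obstacle} is the exact propagation of rotation angles through the nesting: it does not suffice that each level's marked fraction and spectral gap be merely \emph{bounded}; they must take \emph{exactly known} values so that Lemma~\ref{lem:beta_fixed} can be calibrated, and one must check that the uniqueness promise really forces this at every level (including after conditioning on the history stored in the quantum data structures), that the invariant subspaces at consecutive nesting levels are mutually compatible so dimensional reduction can be iterated, and that the checker at each level is faithfully a rotation by the claimed angle on the claimed subspace. Additional care is needed to confirm that deleting the error-reduction steps does not otherwise disturb the MNRS composition, and to treat the ``zero marked'' (no triangle) case at every level so that it stays distinguishable with certainty at the end.
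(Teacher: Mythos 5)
Your overall plan --- derandomize the four nested layers bottom-up, using invariant-subspace reduction to make each walk's rotation angle exactly computable and an adjustable-parameter exact search at each level --- is the same strategy the paper follows, and the role you assign to the uniqueness promise (making the marked fractions $\epsilon_i$ exactly known) is also correct. However, two mechanisms in your sketch are misattributed in a way that would matter if you tried to complete the proof.

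First, you propose to pass each level's output upward as a ``rotation rather than reflection'' checker and let Lemma~\ref{lem:beta_fixed} absorb the resulting angle defect. The paper does not do this, and it is not clear it could work: Lemma~\ref{lem:beta_fixed} assumes the checking operator $S_{\mathcal{M}}(\alpha)$ has a \emph{controllable} phase $\alpha$ and that it is the diffusion $S_{\psi_0}(\beta)$ that carries the fixed, uncontrollable phase. Instead, the paper converts each level's exact search into a deterministic state preparation $\mathcal{A}_i:\ket{0}\mapsto\ket{t}$ and then realizes the checker of the level above as $\mathcal{A}_i^\dagger C_i \mathcal{A}_i$, which flips an ancilla with certainty; via phase kickback this yields a genuine reflection (or any desired phase $e^{i\alpha}$), so no angle defect is ever propagated upward. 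The fixed-angle issue arises in exactly one place, and for a different reason: in layer 2 the reflection about $\ket{\psi_0}$ cannot be implemented as $\mathcal{A}e^{-i\beta\ket{0}\bra{0}}\mathcal{A}^\dagger$ because re-running the setup would cost $r_1 r_2 = n^{9/7}$ queries inside the layer-1 iteration (Remark~\ref{rem:layer2}); it is replaced by $U^t$, which by Lemma~\ref{lem:eedp} acts as a phase shift about $\ket{\psi_0}$ with a fixed known $\beta\approx 1.29\pi$, and only there is Lemma~\ref{lem:beta_fixed} invoked. Layers 1, 3 and 4 use Long's fully adjustable scheme (Lemma~\ref{lem:long}) because their setup costs are affordable and their success amplitudes are exactly computable (Lemmas~\ref{lem:layer_1_success} and \ref{lem:layer_4_success}).

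Second, the dimensional reduction does not rest on there being a \emph{unique marked vertex} of the Johnson graph: in every layer the marked set is large (e.g.\ all $R_1$ with $|R_1\cap\triangle|=1$ in layer 1). What makes the invariant subspace $O(1)$-dimensional is the symmetry of the marked set, i.e.\ that membership depends only on the intersection pattern with a fixed small set $K$ (Condition~\ref{cond:five}); the uniqueness of the triangle guarantees this structure and makes $|K|$ and hence $\epsilon_i$ exactly known, but the subspaces are $10$-, $5$-, and $9$-dimensional, not spanned by start and single target. Your ``main obstacle'' paragraph correctly identifies the need to verify compatibility of these subspaces across levels, but the resolution is the $\mathcal{A}_i^\dagger C_i\mathcal{A}_i$ sandwich above, not calibration of propagated rotation angles.
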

Specifically, the  algorithm outputs the target triangle if it exists and claims there's none if no such triangle exists.

\begin{corollary}
    There is a deterministic quantum algorithm for triangle finding which makes $O(n^{9/7})$ queries.
\end{corollary}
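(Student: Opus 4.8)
The plan is to exhibit triangle finding (with the at-most-one-triangle promise inherited from Theorem~\ref{thm:main}) as a special case of the triangle sum promised problem, so that the algorithm of Theorem~\ref{thm:main} applies essentially verbatim. Given a graph $H$ on $n$ vertices accessed through its adjacency oracle, I would define the edge-weight matrix $A\in[M]^{n\times n}$ by $A_{i,j}=1$ when $\{i,j\}$ is an edge of $H$ and $A_{i,j}=0$ otherwise (in particular $A_{i,i}=0$), taking $M=4$ so that weights occupy $m=2$ qubits, and setting the target value $d=3$. Then for any three distinct vertices $a,b,c$ the quantity $A_{a,b}+A_{b,c}+A_{c,a}$ lies in $\{0,1,2,3\}$ and equals $3 = d \bmod M$ precisely when all three pairs are edges of $H$, i.e.\ precisely when $abc$ is a triangle of $H$; degenerate choices with a repeated vertex give a sum of $0$ or $2A_{a,c}\in\{0,2\}$ and so are automatically excluded. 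Note that a modulus $M\le 3$ would not do — e.g.\ $M=2$ conflates a triangle with a single edge — whereas any $M\ge 4$ works equally well.

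Next I would verify that one query to the oracle $O$ of Eq.~\eqref{eq:oracle_triangle_sum} for this $A$ costs only $O(1)$ queries to the adjacency oracle of $H$: since a query to the adjacency oracle returns the single bit $A_{i,j}$, the map $\ket{i,j}\ket{b}\mapsto\ket{i,j}\ket{b\oplus A_{i,j}}$ on the $m=2$ weight qubits is realized with one adjacency query (writing $A_{i,j}$ into a workspace qubit, XOR-ing it into the low bit of $b$) and one more to uncompute. Hence an algorithm making $Q$ queries to $O$ becomes one making $O(Q)$ queries to the adjacency oracle. Moreover, under this encoding the promise that $H$ has at most one triangle is exactly the promise that the weighted graph with matrix $A$ has at most one target triangle for $d=3$. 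Invoking Theorem~\ref{thm:main} then yields a deterministic quantum algorithm that uses $O(n^{9/7})$ queries to $O$, hence $O(n^{9/7})$ queries to the adjacency oracle, and with certainty outputs the triangle of $H$ if one exists and reports ``no triangle'' otherwise.

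The only genuine subtlety, and the step I would be most careful to state correctly, is the status of the promise: the reduction above is faithful only when the input graph contains at most one triangle, so the corollary is to be understood as covering this promised version of triangle finding. Removing the promise for arbitrary graphs is not addressed here and would require a \emph{deterministic} isolation of a single triangle, which falls outside the present framework; everything else in the argument is routine bookkeeping.
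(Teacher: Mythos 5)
Your reduction is exactly the one the paper uses: set $M=4$, $d=3$, and restrict $A$ to $\{0,1\}^{n\times n}$, then invoke Theorem~\ref{thm:main}; the extra bookkeeping about simulating the oracle of Eq.~\eqref{eq:oracle_triangle_sum} with $O(1)$ adjacency queries is correct but routine. Your closing caveat is well taken — the corollary as stated inherits the at-most-one-triangle promise from Theorem~\ref{thm:main}, a qualification the paper's own one-line proof leaves implicit.
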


\begin{proof}
    Triangle finding is a special case of the triangle sum problem, which can be seen by setting $M=4, d=3$ and restricting $A$ to $[2]^{n \times n}$.
    Thus, quantum algorithm solving the triangle sum problem also solves triangle finding.
\end{proof}

Note that the addition in Eq.~\eqref{eq:def_sum_problem} is modulo $M$, which is crucial in proving the $\Omega(n^{9/7}/\sqrt{\log n})$ lower bound of the triangle sum problem~\cite{Belovs_power}.
Intuitively, `addition modulo $M$' makes it impossible for an algorithm to rule out potential triangle, say $\triangle a'b'c'$ in $G$, when the queried edge weights $A_{a',b'}$ and $A_{b',c'}$ already sum up to greater than $d$ or either of them is zero.
A more formal description of this property can be found in~\cite{belovs2012_ksum,Belovs_power} referred to as the orthogonal array condition or the orthogonality property.

\subsection{Paper organization}
The rest of the paper is organized as follows.
In Section~\ref{sec:preliminary} we introduce some important techniques that will be used later: two forms of quantum walk search on Johnson graph dubbed ``edge-walk'' and ``vertex-walk'', a technique to reduce the dimension of the invariant subspace of quantum vertex-walk on Johnson graph, and deterministic quantum search with adjustable parameters.
In Section~\ref{sec:algorithm}, we first sketch the procedure of our algorithm which contains four layers of subroutines in Section~\ref{subsec:alg_outline}, and then show that each layer of subroutine can be derandomized in Section~\ref{subsec:make_layer_zero}, thus obtaining a deterministic quantum algorithm for the triangle sum promised problem.
The details of the quantum walk search on Johnson graph used in three of the four layers, i.e. how to implement the setup, update, checking operations and what is the probability of reaching the target state, are presented in Section~\ref{sec:each_layer}.
We conclude this paper with some related problems in Section~\ref{sec:conclude}.

\section{Preliminaries}\label{sec:preliminary}
In this section, we present some techniques and results that will be used latter in our deterministic quantum algorithm for the triangle sum promised problem.
We will use two forms of quantum walks search on Johnson graph  with subtle differences.
The first one described in Section~\ref{subsec:pre_edge} stems from the nested quantum walk~\cite{Jeffery_triangle}, and we call it ``quantum \textit{edge}-walk on Johnson graph'', since its basis state $\ket{R} \ket{R'}$ can be seen as an edge of a Johnson graph.
The second one described in Section~\ref{subsec:pre_vertex} originates from Ambainis' quantum walks for the element distinctness problem~\cite{Ambainis07}, and we call it ``quantum \textit{vertex}-walk on Johnson graph'', since in its basis state $\ket{R,y}$, $\ket{R}$ can be seen as a vertex of a Johnson graph and $\ket{y}$ as a coin used to update $\ket{R}$.

In Section~\ref{subsec:pre_vertex} we will also introduce a technique (Lemma~\ref{lem:eedp}) to reduce the dimension of the invariant subspace of quantum vertex-walk on Johnson graph, when certain conditions (Condition~\ref{cond:five}) are satisfied.
In order to achieve certainty of success, we will also need deterministic quantum search (with adjustable parameters) as described in Section~\ref{subsec:pre_adjustable}.

\subsection{Quantum edge-walk search on Johnson graph}\label{subsec:pre_edge}
A Johnson graph $J(N,r)$ has $\binom{N}{r}$ vertices. Each vertex is a subset $R\subseteq [N]$ of size $r$, and two vertices $R,R'$ are connected by an edge if and only if $|R\cap R'|=r-1$.
We denote by $V(G)$ the vertex set of $G:=J(N,r)$.

The quantum edge-walk on $G$ has state space $\{ \ket{R}\ket{R'}\ket{D(R)}: R,R'\in V(G) \}$.
The data $D(R)$ associated with $R$ relies on the input of the problem to be solved, and we check if a vertex $R$ is marked based on whether $D(R)$ satisfies certain condition.
The goal of quantum walk search on $G$ is to obtain a marked vertex with constant probability, starting from an initial state $\ket{\psi_0}$ and using update operations that comply with the graph's edges (i.e. to walk on the graph).
Specifically, the quantum edge-walk on $G$ consists of the following three operations.

\textbf{Setup.}
Denote by $S$ the Setup operation that transforms the all zero state to the initial state:
\begin{equation}
    \ket{\psi_0} = S \ket{0},
\end{equation}
where $\ket{\psi_0}$ is an equal superposition of all the edges in $G$:
\begin{equation}
    \ket{\psi_0} = \frac{1}{\sqrt{\binom{N}{r}}} \sum_{R} \ket{R}
\frac{1}{\sqrt{r(N-r)}}\sum_{R \to R'} \ket{R'}
\ket{D(R)},
\end{equation}
where $R \to R'$ means $R'$ is adjacent to $R$ in $G$, or equivalently $|R \cap R'|=r-1$.


\textbf{Update.}
One step of quantum walk, denoted by the update operation $U$, consists of three unitary operators:
\begin{equation}
    U = \mathrm{Data} \cdot \mathrm{Swap} \cdot \mathrm{Coin},
\end{equation}
where `Coin' acting on $\ket{R}\ket{R'}$ is the Grover diffusion of vertices adjacent to $\ket{R}$:
\begin{align}
\mathrm{Coin} &= \sum_{R} \ket{R}\bra{R} \otimes \big( 2\ket{\varphi(R)} \bra{\varphi(R)} - I \big), \\
\ket{\varphi(R)} &= \frac{1}{\sqrt{r(N-r)}}\sum_{R \to R'} \ket{R'},
\end{align}
and
\begin{align}
    & \mathrm{Swap} \ket{R} \ket{R'} = \ket{R'} \ket{R},\\
    & \mathrm{Data} \ket{R'}\ket{R}\ket{D(R)} = \ket{R'}\ket{R}\ket{D(R')}.
\end{align}

\textbf{Checking.} The checking subroutine $C$ adds phase shift $(-1)$ to $\ket{R}\ket{R'}\ket{D(R)}$, if the associated data $\ket{D(R)}$ satisfies certain condition.

The whole process of quantum walk search on $G$ can be formulated in the following equation
\begin{equation}\label{eq:pre_walk_process}
    \ket{\psi_k} = (U^{\sqrt{r}}C)^{k} S \ket{0}.
\end{equation}
The process is similar to Grover's algorithm, as $U^{\sqrt{r}}$ can be seen as a reflection through the initial state $\ket{\psi_0}$ and $C$ as a reflection through the marked states.
It should be noted that the $U^{\sqrt{r}}$ we used in this paper is replaced with phase estimation and selected $(-1)$ phase shift in~\cite{Jeffery_triangle,MNRS}.
Finally, by choosing an appropriate $k$ and measuring $\ket{\psi_k}$ in the first register, we will obtain a marked vertex with constant probability.

\subsection{Quantum vertex-walk search on Johnson graph}\label{subsec:pre_vertex}

The quantum vertex-walk search on Johnson graph $J(N,r)$ has state space $\{ \ket{R,y}\ket{D(R)} : R\subseteq [N], y\in [N]-R \}$.

\textbf{Setup.} The initial state is
\begin{equation}\label{eq:pre_vertex_psi0}
    \ket{\psi_0} = \frac{1}{\sqrt{\binom{N}{r}}} \sum_R \ket{R} \ket{D(R)}
    \frac{1}{\sqrt{N-r}} \sum_{y\in [N]-R} \ket{y}.
\end{equation}

\textbf{Update.} One step of quantum walk $U$ consists of two unitary operators: $U = U_B(\theta_2) U_A(\theta_1)$.
The first operator $U_A(\theta_1)$ acts on $\ket{R,y}$, and can be seen as choosing a random $y\in [N]-R$ to be moved into $R$:
\begin{align}
    U_A(\theta_1) &= \sum_{R} \ket{R}\bra{R} \otimes \big( I-(1-e^{i\theta_1}) \ket{\varphi(R)}\bra{\varphi(R)} \big), \label{eq:pre_U_A_def}\\
    \ket{\varphi(R)} &= \frac{1}{\sqrt{N-r}} \sum_{y\in[N]-R} \ket{y}. \label{eq:pre_A_R_def}
\end{align}
The second operator $U_B(\theta_2)$ can be seen as choosing a random $y'\in R$ being removed from $R$ and at the same time moving $y$ into $R$.
To update $D(R)$ simultaneously, $U_B(\theta_2)$ acts on all registers, but we only define its effect on register $\ket{R,y}$ for simplicity:
\begin{align}
    U_B(\theta_2) &= I- (1-e^{i\theta_2}) \sum_{R+y \subseteq [N]} \ket{\varphi(R+y)} \bra{\varphi(R+y)}, \label{eq:pre_U_B_def}\\
    \ket{\varphi(R+y)} &= \frac{1}{\sqrt{r+1}} \sum_{y'\in R+y} \ket{R+y-y',y'}. \label{eq:pre_B_R_def}
\end{align}
Note that we enhance the original update operation from~\cite{Ambainis07} with general phase shift $e^{i\theta}$ instead of $(-1)$, in order to achieve dimensional reduction of the walk's invariant subspace.

\textbf{Checking.} The checking subroutine $S_\mathcal{M}(\alpha)$ adds relative phase shift $e^{i\alpha}$ to a marked state $\ket{R,y}\ket{D(R)}$, based on whether the associated data $\ket{D(R)}$ satisfies certain condition.
In fact, $S_\mathcal{M}(\alpha)$ can be implemented by $C (I\otimes \mathrm{diag}(1,e^{i\alpha})) C$, if there is a checking subroutine $C$ that flips an auxiliary qubit initialized with $\ket{0}$, when the basis state is marked.
\\

\textbf{5-dimensional invrariant subspace $\mathcal{H}_0$.} 
Suppose the Checking subroutine $C$ satisfies the following condition:

\begin{condition}\label{cond:five}
    There is a special subset $K\subseteq [N]$ with $|K|=2$, such that for a certain $(j_0,l_0) \in \{ (j,0) \}_{j=0}^{2} \cup \{ (j,1) \}_{j=0}^{1}$, the checking subroutine $C$ marks a basis state $\ket{R,y}$ satisfying $|R \cap K| = j_0$ and $|y \cap K| = l_0$.
\end{condition}

Then the quantum vetex-walk search on $J(N,r)$ can be reduced to a $5$-dimensional invariant subspace $\mathcal{H}_0$ spanned by:
\begin{equation}\label{eq:eedp_H0}
    \mathcal{B}_0 := \{ \ket{0,0},\ket{0,1},\ket{1,0},\ket{1,1},\ket{2,0} \},
\end{equation}
where $\ket{j,l}\in\mathcal{B}_0$ is the equal superposition of basis states in $S_{j}^{l} := \{ \ket{R,y} : |R\cap K|=j, |y\cap K|=l \}$.
In basis $\mathcal{B}_0$, the initial state takes the following form:
\begin{equation}
    \ket{\psi_0} = \frac{1}{\sqrt{\binom{N}{r}(N-r)}} \sum_{\ket{j,l}\in \mathcal{B}_0} \sqrt{|S_j^l|}\cdot \ket{j,l},
\end{equation}
and the target state is $\ket{j_0,l_0}$.

Because $S_{j}^{l}$ can also be seen as the set of $\ket{R,y}$ satisfying $|(R+y)\cap K| = j+l$ and $|y\cap K|=l$, the size of $S_{j}^{l}$ can be calculated in two ways.
For $\{ \ket{j,0} \}_{j=0}^{2}$, we have:
\begin{equation}
    \binom{2}{j} \binom{N-2}{r-j} (N-2-(r-j)) = |S_j^0| = \binom{2}{j} \binom{N-2}{r+1-j}(r+1-j),
\end{equation}
and for $\{ \ket{j,1} \}_{j=0}^{1}$, we have:
\begin{equation}
    \binom{2}{j} \binom{N-2}{r-j} (2-j) = |S_j^1| = \binom{2}{j+1} \binom{N-2}{r-j}(j+1).
\end{equation}
This is depicted in Fig.~\ref{fig:pre_subspace} by the two equivalent ways (from left or from right) of calculating the weight of the middle line marked by $\ket{j,l}$ with dashed box.


\begin{figure}
    \centering
    \includegraphics[width=0.6\textwidth]{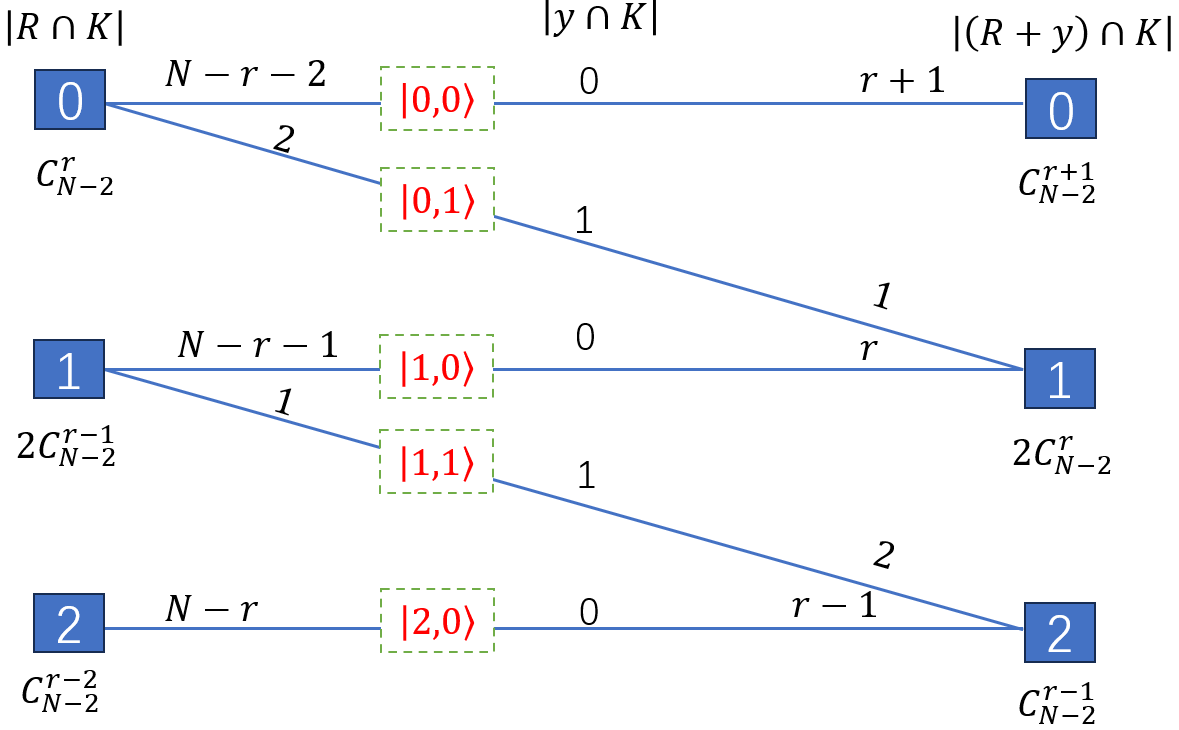}
    \caption{Illustration of the 5 basis states $\ket{j,l} \in \mathcal{B}_0$ of the invariant subspace $\mathcal{H}_0$. The two equivalent ways of calculating $|S_j^l|$ is depicted by the two ways of calculating the weights of the lines. }
    \label{fig:pre_subspace}
\end{figure}

From Fig.~\ref{fig:pre_subspace} and the definition of $U_A(\theta_1)$ and $U_B(\theta_2)$, it can be seen that $U = U_B(\theta_2) U_A(\theta_1)$ takes the following form in $\mathcal{B}_0$:
\begin{align}
    U &= (I-(1-e^{i\theta_2})BB^\dagger) \cdot ( I-(1-e^{i\theta_1})AA^\dagger ),\\
A.^2 &=
\left[
\begin{array}{ccc}
{1-\frac{2}{N-r}} & 0 & 0 \\
{\frac{2}{N-r}} & 0 & 0 \\
0 & {1-\frac{1}{N-r}} & 0 \\
0 & {\frac{1}{N-r}} & 0 \\
0 & 0 & 1
\end{array}
\right],
\quad B.^2 =
\left[
\begin{array}{ccc}
1 & 0 & 0 \\
0 & \frac{1}{r+1} & 0 \\
0 & 1-\frac{1}{r+1} & 0 \\
0 & 0 & \frac{2}{r+1} \\
0 & 0 & 1-\frac{2}{r+1}
\end{array}
\right],
\end{align}
where $A,B$ are non-negative matrices, and $A.^2$ denotes the entry wise square of $A$.
\\

\textbf{Reducing the dimension further to 2.}
We can now state the following Lemma~\ref{lem:eedp} due to Lemmas~1 and 2 in Ref.~\cite{eedp}.

\begin{lemma}\label{lem:eedp}
By setting the parameters $(\theta_1, \theta_2)$ in $U = U_B(\theta_2) U_A(\theta_1)$ and $t \in O(\sqrt{r})$ appropriately,
$U^{t}$ becomes a phase shift of the initial state in $\mathcal{H}_0$:
\begin{equation}
    U^{t} = I-(1-e^{i\beta}) \vert{\psi_0}\rangle \langle{\psi_0}\vert.
\end{equation}
Furthermore, the angle $\beta = t\frac{\theta_1+\theta_2}{2} \ {\rm mod}\  2\pi $ is known and $\beta \approx 1.29\pi$ when $N\to \infty$.
\end{lemma}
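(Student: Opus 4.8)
The plan is to work entirely inside the $5$-dimensional invariant subspace $\mathcal{H}_0$ with basis $\mathcal{B}_0$, where the walk operator $U = (I-(1-e^{i\theta_2})BB^\dagger)(I-(1-e^{i\theta_1})AA^\dagger)$ is an explicit $5\times 5$ unitary whose entries are rational functions of $N$ and $r$ (through the matrices $A.^2$, $B.^2$ given above). First I would observe that $U$ is a product of two (generalized) reflections, so by the standard ``Jordan / two-reflections'' structure its action decomposes into invariant blocks of dimension at most $2$; concretely, $\mathrm{span}\{AA^\dagger\text{-columns}\}$ and $\mathrm{span}\{BB^\dagger\text{-columns}\}$ each have dimension $2$ inside $\mathcal{H}_0$ (reading off the block structure of $A.^2$ and $B.^2$: $A$ has columns supported on $\{\ket{0,0},\ket{0,1}\}$ and $\{\ket{1,0},\ket{1,1}\}$, while $B$ has columns supported on $\{\ket{0,1},\ket{1,0}\}$ and $\{\ket{1,1},\ket{2,0}\}$), and $\ket{\psi_0}$ lies in their span. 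The key point inherited from Lemmas~1 and~2 of~\cite{eedp} is that one can choose $\theta_1,\theta_2$ so that the two nontrivial rotation angles of $U$ (restricted to the relevant $2$-dimensional subspaces) become \emph{commensurate}, i.e. rational multiples of each other with small denominators, so that after $t\in O(\sqrt{r})$ steps every component orthogonal to $\ket{\psi_0}$ has been rotated back to itself, leaving only a global phase $e^{i\beta}$ on the $\ket{\psi_0}$ direction.

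In more detail, the steps I would carry out are: (i) compute the spectral/singular-angle data of $U$ in $\mathcal{H}_0$ as a function of $(\theta_1,\theta_2,N,r)$ — this reduces to diagonalizing a couple of $2\times2$ matrices and tracking how $\ket{\psi_0}$ sits relative to the eigenvectors; (ii) extract the principal rotation angle $\delta(\theta_1,\theta_2)$ governing the $\ket{\psi_0}$-containing plane, which to leading order in $1/r$ behaves like $\Theta(1/\sqrt r)$ times a function of $\theta_1+\theta_2$, so that $t\delta\approx t\frac{\theta_1+\theta_2}{2}$ for a suitable $t=\Theta(\sqrt r)$; (iii) invoke the parameter choice from~\cite{eedp} that forces the remaining blocks to return to identity at exactly the same $t$, which is the ``adjustable parameters'' idea — one has enough freedom in $(\theta_1,\theta_2,t)$ to satisfy the commensurability constraints simultaneously; and (iv) conclude $U^t = I-(1-e^{i\beta})\ket{\psi_0}\bra{\psi_0}$ with $\beta = t\frac{\theta_1+\theta_2}{2}\bmod 2\pi$, then take $N\to\infty$ in the explicit expressions for $A.^2,B.^2$ (where the $\frac{1}{N-r}$ entries vanish and the matrices simplify) to get the numerical value $\beta\approx 1.29\pi$.

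The main obstacle I anticipate is step (ii)–(iii): verifying that a \emph{single} pair $(\theta_1,\theta_2)$ and a \emph{single} $t\in O(\sqrt r)$ can simultaneously (a) make all the ``parasitic'' two-dimensional blocks of $U$ complete a whole number of rotations (so $U^t$ acts as the identity on $\ket{\psi_0}^\perp\cap\mathcal{H}_0$) and (b) leave the controlled phase $\beta$ on $\ket{\psi_0}$ at a usable value. This is a nontrivial Diophantine-type alignment of several rotation angles, each of which depends on $N,r$; handling the $r$-dependence (the angles are only \emph{approximately} commensurate for finite $r$, with error $o(1/\sqrt r)$) and showing the residual misalignment stays negligible over $O(\sqrt r)$ steps is the delicate part. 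Fortunately this is precisely what Lemmas~1 and~2 of~\cite{eedp} establish for the element-distinctness walk, and since our $U$ has the same $5\times5$ form (only the checking condition, hence $(j_0,l_0)$, differs, not the walk dynamics), I would argue the cited lemmas apply verbatim; the only thing to check carefully is that Condition~\ref{cond:five} indeed produces exactly this $\mathcal{H}_0$ and these $A.^2,B.^2$, which is the bookkeeping done just above the lemma statement.
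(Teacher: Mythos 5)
Your overall route is the same as the paper's: the paper does not prove Lemma~\ref{lem:eedp} from scratch either, but rather sets up the $5$-dimensional invariant subspace $\mathcal{H}_0$ with the explicit matrices $A.^2$, $B.^2$ and then imports Lemmas~1 and~2 of~\cite{eedp} wholesale. Your closing observation --- that the only thing requiring verification is that Condition~\ref{cond:five} reproduces exactly this $\mathcal{H}_0$ and these matrices, the marking pair $(j_0,l_0)$ being irrelevant to the walk dynamics --- is precisely the transfer argument the paper relies on, so at that level the proposal is sound.

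That said, two points in your sketch of the internal mechanism are off, and the second is a genuine conceptual gap. First, a bookkeeping error: inside $\mathcal{H}_0$ the projections $AA^\dagger$ and $BB^\dagger$ have rank $3$, not $2$ ($A$ and $B$ each have three columns; you dropped the column supported on $\ket{2,0}$ and the one supported on $\ket{0,0}$, respectively), so the Jordan decomposition consists of a distinguished one-dimensional piece plus two $2$-dimensional blocks. Second, and more seriously, your steps (ii)--(iii) treat both the phase acquired by $\ket{\psi_0}$ and the commensurability of the remaining rotation angles as asymptotic statements with residual errors of order $o(1/\sqrt r)$ to be "kept negligible over $O(\sqrt r)$ steps". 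That reading cannot prove the lemma: the identity $U^{t} = I-(1-e^{i\beta}) \ket{\psi_0}\bra{\psi_0}$ must hold \emph{exactly}, since it is fed into Lemma~\ref{lem:beta_fixed} as the reflection $S_{\psi_0}(-\beta)$ with exactly known $\beta$, and any residual misalignment would destroy the certainty of success that is the entire point of the construction. The facts that make exactness achievable are (a) $\ket{\psi_0}$ lies in $\mathrm{range}(A)\cap\mathrm{range}(B)$ --- it is simultaneously the uniform superposition of the states $\ket{R}\ket{\varphi(R)}$ and of the states $\ket{\varphi(R+y)}$ --- so it is an \emph{exact} common eigenvector of $U_A(\theta_1)$ and $U_B(\theta_2)$ and picks up exactly the phase $e^{i(\theta_1+\theta_2)}$ per step, for every choice of parameters; this, measured against the central phase $e^{i(\theta_1+\theta_2)/2}$ carried by the two Jordan blocks, is where the exactly known $\beta = t\frac{\theta_1+\theta_2}{2}$ comes from, not from a leading-order expansion; and (b) the two nontrivial block rotation angles are continuous functions of $(\theta_1,\theta_2)$, so requiring each to complete an integer number of turns after $t$ steps gives two exact equations in the two unknowns $(\theta_1,\theta_2)$, which~\cite{eedp} shows are solvable for a suitable integer $t\in O(\sqrt r)$; only the numerical value $\beta\approx 1.29\pi$ is asymptotic. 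Executed as written, your plan of accepting approximately commensurate angles would yield only a bounded-error statement, which is weaker than what the lemma asserts and insufficient for the derandomization downstream.
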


Thus the invariant subspace of quantum vertex-walk search on $J(N,r)$, formulated in the following equation:
\begin{equation}
    \ket{\psi_k} = \big( U^t S_\mathcal{M}(\alpha) \big)^k \ket{\psi_0},
\end{equation}
further reduces to $2$-dimensional spanned by $\ket{\psi_0}$ and the target state $\ket{j_0,l_0}$.

\subsection{Deterministic quantum search with adjustable parameters}\label{subsec:pre_adjustable}
Suppose there is a quantum process (unitary operation) $\mathcal{A}$ that transforms the all zero state $\ket{0}$ to $\ket{\psi_0}$, which contains a set of target basis states $\mathcal{M} := \{ \ket{t} \}$.
Denote by $\Pi_\mathcal{M} := \sum_{\ket{t}\in \mathcal{M}} \ket{t}\bra{t} $ the projection onto $\mathrm{span}(\mathcal{M})$.
If the success probability of measuring $\ket{\psi_0}$ that leads to a target state, i.e. $\lambda := \| \Pi_\mathcal{M} \ket{\psi_0} \|^2$ is known beforehand.
Then we can transform $\ket{\psi_0}$ to $\ket{\mathcal{M}} := \frac{1}{\sqrt{|\mathcal{M}|}}\sum_{\ket{t}\in \mathcal{M}}\ket{t}$ with certainty, by iterating the generalized Grover's operation $G(\alpha,\beta)$ for $O(1/\sqrt{\lambda})$ times:
\begin{align}
    G(\alpha,\beta) &:= S_{\psi_0}(\beta) \cdot S_\mathcal{M}(\alpha) \\
    &:= e^{-i\beta \ket{\psi_0} \bra{\psi_0}} \cdot e^{i\alpha\Pi_\mathcal{M}} \\
    &= (\mathcal{A} e^{-i\beta \ket{0} \bra{0}} \mathcal{A}^\dagger) \cdot e^{i\alpha\Pi_\mathcal{M}}.
\end{align}

If the two parameters $\alpha,\beta$ are user-controllable, there are at least three schemes to achieve deterministic quantum search~\cite{amplitude_amplification,arbi_phase,Long}. But the following lemma due to Long~\cite{Long} is most concise.

\begin{lemma}[\cite{Long}]\label{lem:long}
    Suppose parameter $\alpha$ satisfies the equation ``$\sin\frac{\pi}{4k+2} = \sqrt{\lambda} \sin\frac{\alpha}{2}$'', and integer $k > k_\mathrm{opt}$, then
    \begin{equation}
        \left| \bra{\mathcal{M}} [G(\alpha,-\alpha)]^k \ket{\psi_0} \right| =1.
    \end{equation}
    The lower bound of number of iterations $k$ is
    \begin{equation}
        k_\mathrm{opt} = \frac{\pi}{4\arcsin\sqrt{\lambda}} -\frac{1}{2}.
    \end{equation}    
\end{lemma}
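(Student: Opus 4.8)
The plan is to reduce the whole dynamics to a two-dimensional invariant subspace and then carry out an explicit $SU(2)$ computation. First I would introduce the normalized target component $\ket{\psi_T} := \Pi_\mathcal{M}\ket{\psi_0}/\sqrt{\lambda}$ and the normalized orthogonal component $\ket{\psi_\perp} := (I-\Pi_\mathcal{M})\ket{\psi_0}/\sqrt{1-\lambda}$, so that $\ket{\psi_0} = \sqrt{\lambda}\ket{\psi_T}+\sqrt{1-\lambda}\ket{\psi_\perp}$. Writing $\sqrt{\lambda}=\sin\theta_0$ with $\theta_0=\arcsin\sqrt{\lambda}$, both building blocks of $G(\alpha,-\alpha)$ preserve $\mathcal{S}:=\mathrm{span}\{\ket{\psi_T},\ket{\psi_\perp}\}$: indeed $S_\mathcal{M}(\alpha)=e^{i\alpha\Pi_\mathcal{M}}$ acts as $e^{i\alpha}$ on $\ket{\psi_T}$ and trivially on $\ket{\psi_\perp}$, while $S_{\psi_0}(-\alpha)=e^{i\alpha\ket{\psi_0}\bra{\psi_0}}$ is nontrivial only along $\ket{\psi_0}\in\mathcal{S}$. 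Hence it suffices to analyze $G(\alpha,-\alpha)$ as a $2\times2$ unitary in the ordered basis $(\ket{\psi_T},\ket{\psi_\perp})$. A small point to record here is that the claim $|\bra{\mathcal{M}}G^k\ket{\psi_0}|=1$ forces the final state to be proportional to $\ket{\mathcal{M}}$; since the evolution is confined to $\mathcal{S}$ and terminates along the target direction $\ket{\psi_T}$, this is exactly the standing symmetric-search assumption $\Pi_\mathcal{M}\ket{\psi_0}\propto\ket{\mathcal{M}}$, so that $\ket{\psi_T}=\ket{\mathcal{M}}$ and the target overlap and the stated overlap coincide.

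Next I would compute the $2\times2$ matrix of $G(\alpha,-\alpha)=S_{\psi_0}(-\alpha)S_\mathcal{M}(\alpha)$ explicitly and factor out the global phase, writing $G=e^{i\alpha}\tilde{G}$ with $\tilde{G}\in SU(2)$ (this works because $\det S_\mathcal{M}(\alpha)=\det S_{\psi_0}(-\alpha)=e^{i\alpha}$). A short trace calculation then yields $\mathrm{Tr}\,\tilde G = 2\big[1-2\lambda\sin^2(\alpha/2)\big]=:2\cos\omega$, so the eigenphase $\omega$ of $\tilde G$ satisfies $\sin^2(\omega/2)=\lambda\sin^2(\alpha/2)$, i.e. $\sin(\omega/2)=\sqrt{\lambda}\sin(\alpha/2)$. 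Equivalently, on the Bloch sphere of $\mathcal{S}$, $\tilde G$ is a rotation by angle $2\omega$, while $S_\mathcal{M}(\alpha)$ and $S_{\psi_0}(-\alpha)$ are rotations by $\alpha$ about the axes through $\ket{\psi_T}$ and $\ket{\psi_0}$ respectively; the composition-of-rotations formula $\cos\omega=\cos^2(\alpha/2)-\sin^2(\alpha/2)\cos\Theta$ with axis separation $\cos\Theta=2\lambda-1$ reproduces the same identity, which is a reassuring consistency check.

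It remains to determine when $k$ iterations steer $\ket{\psi_0}$ exactly onto $\ket{\psi_T}$. I would diagonalize $\tilde G$ as $\{\ket{v_+},\ket{v_-}\}$ with eigenvalues $e^{\pm i\omega}$, expand $\ket{\psi_0}$ in this basis, and impose $|\bra{\psi_\perp}G^k\ket{\psi_0}|=0$. The decisive structural fact, and the step I expect to be the main obstacle, is that the phase-matching choice $\beta=-\alpha$ makes $\ket{\psi_0}$ and $\ket{\psi_T}$ symmetric about the rotation axis $\ket{v_\pm}$, i.e. $|\langle\psi_T|v_\pm\rangle|=|\langle\psi_0|v_\pm\rangle|$; this equal-latitude property is precisely what allows a pure rotation to carry $\ket{\psi_0}$ onto $\ket{\psi_T}$, and it fails for unequal phases. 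Granting it, the vanishing-overlap condition collapses to the single phase equation $(2k+1)\omega=\pi$, that is $\omega/2=\pi/(4k+2)$; combined with $\sin(\omega/2)=\sqrt{\lambda}\sin(\alpha/2)$ this is exactly the hypothesis $\sin\frac{\pi}{4k+2}=\sqrt{\lambda}\sin\frac{\alpha}{2}$. Finally I would extract the range of admissible $k$: a real solution $\alpha$ with $\sin(\alpha/2)\le 1$ exists iff $\sin\frac{\pi}{4k+2}\le\sqrt{\lambda}$, equivalently $\frac{\pi}{4k+2}\le\arcsin\sqrt{\lambda}$, which rearranges to $k\ge\frac{\pi}{4\arcsin\sqrt{\lambda}}-\frac12=k_\mathrm{opt}$; the boundary $k=k_\mathrm{opt}$ forces $\alpha=\pi$ (ordinary Grover, exact only for special $\lambda$), so for a genuinely adjustable sub-$\pi$ phase one takes $k>k_\mathrm{opt}$, giving the claimed lower bound and the $O(1/\sqrt{\lambda})$ iteration count.
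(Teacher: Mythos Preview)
The paper does not supply its own proof of this lemma: it is quoted verbatim from Long's work and attributed via the citation \cite{Long}. So there is no in-paper argument to compare against; you are, in effect, reconstructing Long's original proof.

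Your reconstruction is essentially correct and follows the standard route: reduce to the two-dimensional subspace $\mathcal{S}=\mathrm{span}\{\ket{\psi_T},\ket{\psi_\perp}\}$, factor $G(\alpha,-\alpha)=e^{i\alpha}\tilde G$ with $\tilde G\in SU(2)$, read off the eigenphase via $\mathrm{Tr}\,\tilde G=2(1-2\lambda\sin^2(\alpha/2))$, and impose the exact-hitting condition $(2k+1)\omega=\pi$. The trace computation and the solvability bound $k>k_{\mathrm{opt}}$ are correct. Two remarks. First, the ``equal-latitude'' symmetry you flag as the delicate point can be made completely explicit without appealing to the Bloch picture: with $\beta=-\alpha$ one checks directly that the antiunitary map $K$ (complex conjugation in the $(\ket{\psi_T},\ket{\psi_\perp})$ basis) satisfies $K\tilde G K^{-1}=\tilde G^{-1}$ and swaps $\ket{\psi_0}\leftrightarrow\ket{\psi_0}$, which forces $|\langle v_\pm|\psi_0\rangle|=|\langle v_\pm|\psi_T\rangle|$ and turns the vanishing-overlap condition into the single phase equation you wrote down. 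Second, your parenthetical about $\ket{\psi_T}$ versus $\ket{\mathcal{M}}$ is well taken: the paper's surrounding text implicitly assumes $\Pi_\mathcal{M}\ket{\psi_0}\propto\ket{\mathcal{M}}$ (as is the case in all applications here), so identifying the two is consistent with how the lemma is used.
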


If one of the parameters, for example $\beta$, is uncontrollable but fixed and known, we can apply the following lemma from~\cite[Theorem 2]{exact}.

\begin{lemma}\label{lem:beta_fixed}
    Suppose $\beta \in (0,2\pi)$ is arbitrarily given, and $k > k_\mathrm{lower}$, then there always exits a pair of parameters $(\alpha_1, \alpha_2)$ such that
    \begin{equation}
        \left| \bra{\mathcal{M}} [G(\alpha_1,\beta) G(\alpha_2,\beta)]^k \ket{\psi_0} \right| =1.
    \end{equation}
    The lower bound of $k$ is
    \begin{equation}
        k_\mathrm{lower} = \frac{\pi}{ \Big| 4\arcsin(\sqrt{\lambda}\sin\frac{\beta}{2}) \mod [-\frac{\pi}{2},\frac{\pi}{2}] \Big| }\in O(1/\sqrt{\lambda}),
    \end{equation}
    where the notation ``$x \mod [-\frac{\pi}{2},\frac{\pi}{2}]$'' means to add $x$ with an appropriate integer multiples $l$ of $\pi$, such that $x+l\pi\in [-\frac{\pi}{2},\frac{\pi}{2}]$.
\end{lemma}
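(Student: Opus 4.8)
The plan is to carry out the entire analysis inside the two–dimensional $G(\cdot,\beta)$–invariant subspace $\mathcal H := \mathrm{span}\{\ket{\psi_0},\,\Pi_{\mathcal M}\ket{\psi_0}\}$, using that $\Pi_{\mathcal M}\ket{\psi_0}$ is proportional to $\ket{\mathcal M}$ in the situations where the lemma is applied (e.g.\ $\ket{\psi_0}$ a uniform superposition over a symmetric domain), so that $\ket{\mathcal M}\in\mathcal H$ and $|\bra{\mathcal M}\cdot\rangle|^2$ is computed inside $\mathcal H$. Fixing the orthonormal basis of $\mathcal H$ in which $\ket{\psi_0}=\sin\theta\,\ket{\mathcal M}+\cos\theta\,\ket{\mathcal M^{\perp}}$ with $\sin\theta=\sqrt{\lambda}$, one checks that, up to a global phase, $S_{\mathcal M}(\alpha)$ is the Bloch–sphere rotation by angle $\alpha$ about the axis through $\ket{\mathcal M}$ and $S_{\psi_0}(\beta)$ is the rotation by angle $\beta$ about the axis through $\ket{\psi_0}$; the two axes enclose the Bloch angle $\xi=\pi-2\arcsin\sqrt{\lambda}$ (so $\cos\xi=2\lambda-1$). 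Hence each $G(\alpha,\beta)=S_{\psi_0}(\beta)S_{\mathcal M}(\alpha)$ is, up to a global phase, a single rotation $\mathrm{Rot}\bigl(\hat m(\alpha,\beta),\Omega(\alpha,\beta)\bigr)$, and the composition–of–rotations identity yields a closed form $\cos\tfrac{\Omega}{2}=\cos\tfrac{\alpha}{2}\cos\tfrac{\beta}{2}\pm(1-2\lambda)\sin\tfrac{\alpha}{2}\sin\tfrac{\beta}{2}$ together with an explicit expression for the axis $\hat m(\alpha,\beta)$.

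Next I would study the two–query super–iteration $g:=G(\alpha_1,\beta)G(\alpha_2,\beta)$, itself a rotation $\mathrm{Rot}(\hat m_{*},\Psi)$ up to a global phase, and spend the two free parameters $\alpha_1,\alpha_2$ on two requirements: (i) that the composite axis $\hat m_{*}$ be orthogonal to the great circle through $\ket{\psi_0}$ and $\ket{\mathcal M}$, so that $g$ acts as an honest planar rotation sliding $\ket{\psi_0}$ along that circle toward $\ket{\mathcal M}$; and (ii) that the per–step angle $\Psi$ satisfy $k\Psi\equiv\pm(\pi-2\arcsin\sqrt{\lambda})\pmod{2\pi}$, so that $k$ copies of $g$ carry $\ket{\psi_0}$ exactly onto $\ket{\mathcal M}$ (up to an irrelevant phase). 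Granted (i) and (ii), $g^{k}\ket{\psi_0}=e^{i\gamma}\ket{\mathcal M}$, which is precisely $\bigl|\bra{\mathcal M}[G(\alpha_1,\beta)G(\alpha_2,\beta)]^{k}\ket{\psi_0}\bigr|=1$.

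It then remains to establish feasibility of (i)--(ii). First solve (i): the explicit axis formula cuts out a one–parameter family of axis–aligning pairs $\alpha_1=h_{\beta}(\alpha_2)$. Along this family, applying the composition identity twice collapses the per–step angle to a formula of the shape $\sin\tfrac{\Psi}{2}=\sqrt{\lambda}\,\bigl|\sin\tfrac{\beta}{2}\bigr|\cdot\rho(\alpha_2)$ for an explicit $\rho$, reflecting that on the axis–aligned family the two $\beta$–rotations fuse into an effective reflection of strength $\sqrt{\lambda}\,|\sin\tfrac{\beta}{2}|$; one then shows that $|\Psi(\alpha_2)|$ sweeps the full interval $\bigl(0,\Psi_{\max}\bigr]$ with $\Psi_{\max}=\bigl|\,4\arcsin(\sqrt{\lambda}\sin\tfrac{\beta}{2})\ \mathrm{mod}\ [-\tfrac{\pi}{2},\tfrac{\pi}{2}]\,\bigr|$, the $\mathrm{mod}\,[-\tfrac{\pi}{2},\tfrac{\pi}{2}]$ encoding the freedom to add any multiple of $\pi$ to the target arc in (ii). Since (ii) asks for $\Psi=\tfrac1k\bigl(\pm(\pi-2\arcsin\sqrt{\lambda})+2\pi l\bigr)$ for some integer $l$, and the appropriate branch of this required value lies in $(0,\Psi_{\max}]$ exactly when $k>\pi/\Psi_{\max}=k_{\mathrm{lower}}$, an intermediate–value argument in $\alpha_2$ produces the desired $(\alpha_1,\alpha_2)$; moreover $k_{\mathrm{lower}}\in O(1/\sqrt{\lambda})$ because $\arcsin(\sqrt{\lambda}\sin\tfrac{\beta}{2})=\Theta(\sqrt{\lambda})$ for fixed $\beta\in(0,2\pi)$, and the bound is tight since $\Psi_{\max}$ is the largest per–step angle any axis–aligned super–iteration can deliver.

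The main obstacle is exactly this feasibility step: writing $g=G(\alpha_1,\beta)G(\alpha_2,\beta)$ in closed form is a somewhat involved $SU(2)$ computation, and one must verify that imposing the axis–alignment (i) does not over–constrain the angle (ii) — i.e.\ that $\alpha_2\mapsto\Psi(\alpha_2)$ along the axis–aligned family really is onto $(0,\Psi_{\max}]$ and that $\Psi_{\max}$ equals the claimed quantity rather than something looser. Getting the branch and modular bookkeeping right — the $\mathrm{mod}\,[-\tfrac{\pi}{2},\tfrac{\pi}{2}]$ and the choice of the integer $l$ — so as to land on the \emph{tight} $k_{\mathrm{lower}}$ rather than a loose constant is the delicate part; everything else is two–dimensional linear algebra together with a continuity argument.
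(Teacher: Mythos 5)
First, a point of comparison: the paper does not actually prove this lemma --- it is imported verbatim as \cite[Theorem 2]{exact}, with the remark that the explicit equations for $(\alpha_1,\alpha_2)$ are to be found there. So your proposal is measured against the cited proof rather than an in-paper argument. Your framework (restrict to the two-dimensional invariant subspace, view $S_{\mathcal M}(\alpha)$ and $S_{\psi_0}(\beta)$ as Bloch rotations about axes separated by $\xi=\pi-2\arcsin\sqrt{\lambda}$, and compose) is sound and is essentially the geometry underlying the cited result. However, there is a genuine gap in your feasibility step, exactly at the place you flagged as the main obstacle: condition (i) over-constrains the problem. Requiring the composite axis $\hat m_*$ to be orthogonal to the great circle through $\ket{\psi_0}$ and $\ket{\mathcal M}$ --- so that $g$ ``slides'' states along that great circle --- pins the axis to the single pole direction of that plane, which is \emph{two} scalar constraints on the axis, not one. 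Together with the angle condition (ii) you are imposing three real equations on the two real unknowns $(\alpha_1,\alpha_2)$; generically the axis-aligned locus is a discrete set, not a one-parameter family $\alpha_1=h_\beta(\alpha_2)$, so there is nothing for your intermediate-value argument to sweep over. A concrete check: take $\lambda=1/2$, so the two rotation axes are orthogonal; a direct quaternion computation of $G(\alpha_1,\beta)G(\alpha_2,\beta)$ shows its in-plane axis components vanish only at $\alpha_1=\alpha_2=\pi$ (for $\sin\beta\neq 0$), a single point, at which the per-step angle $\Psi$ is rigid and cannot be matched to $\xi/k$ for arbitrary $k$.

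The repair --- and what the algebraic proof in \cite{exact} effectively encodes --- is to impose only the weaker, necessary-and-sufficient geometric condition: $\hat m_*$ need only be \emph{equidistant} in Bloch angle from the points of $\ket{\psi_0}$ and $\ket{\mathcal M}$ (i.e., lie on their perpendicular-bisector plane), which is one scalar constraint and does cut out a curve in $(\alpha_1,\alpha_2)$. Along that curve the orbit of $\ket{\psi_0}$ under powers of $g$ is a small circle through $\ket{\mathcal M}$, not the great circle, and the second parameter tunes $k\Psi$ against the now axis-dependent arc separating the two points. Equivalently, one works with the single complex condition $\bra{\mathcal M^\perp}[G(\alpha_1,\beta)G(\alpha_2,\beta)]^k\ket{\psi_0}=0$ --- two real equations in two real unknowns --- and proves solvability precisely when $k>k_{\mathrm{lower}}$. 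Your identification of $\Psi_{\max}$, and hence of the tight $k_{\mathrm{lower}}$ with its $\mathrm{mod}\,[-\frac{\pi}{2},\frac{\pi}{2}]$ bookkeeping, is plausible in form but would have to be re-derived over this larger family; as written, the argument does not go through.
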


The explicit equations that $(\alpha_1,\alpha_2)$ need to satisfy in Lemma~\ref{lem:beta_fixed} can be found in~\cite{exact}.

\section{Deterministic quantum algorithm for the triangle sum promised problem}\label{sec:algorithm}
Since our algorithm for the promised problem of triangle sum (Definition~\ref{prob:triangle_sum}) is based on the derandomization of Jeffery et al.'s algorithm~\cite{Jeffery_triangle},
it also has $4$ layers of subroutines as described in Section~\ref{subsec:alg_outline}.
We show the derandomization of each layer in Section~\ref{subsec:make_layer_zero}, and thus obtain a deterministic quantum algorithm and complete the proof of Theorem~\ref{thm:main}.

\subsection{Outline and query complexity}\label{subsec:alg_outline}
Recall that in the triangle sum problem, we are trying to find a triangle whose edges sum up to a given weight $d$ modulo $M$ in an edge-weighted graph $G$ with $n$ vertices, by querying its edge weight matrix $A \in [M]^{n\times n}$ as few times as possible (through oracle $O$ in Eq.~\eqref{eq:oracle_triangle_sum}).

Our algorithm has $4$ layers of subroutines:
layers $1,2,4$ are quantum walk search on Johnson graph, and layer $3$ is a Grover search.
Each layer implements the Check operation of its upper layer.
Intuitively, layer $1$, $2$, and $(3,4)$ as a whole, find one by one, three vertices in the target triangle $\triangle abc := \triangle$.

Denote by $S_i,U_i,C_i$ the setup, update, checking operation in the $i$-th layer,
the query complexity of implementing the respective operations from the oracle $O$ are denoted by $s_i, u_i, c_i$.
We also denote by $\epsilon_i$ the proportion of marked vertices in $V(G_i)$, where $G_i$ is the Johnson graph in the $i$-th layer.
For two subsets $R_1$ and $R_2$ of graph $G$'s vertex set $[n]$, denote by $G_{R_1,R_2}$ the sub-matrix of the edge weight matrix $A$ with rows indexed by $R_1$ and columns indexed by $R_2$.

The 4 layers of subroutines are listed below.
\begin{enumerate}
    \item A quantum edge-walk search on Johnson graph $G_1 = J(n,r_1)$.
    We set $r_1 = n^{4/7}$ so that the total query complexity $c_0$ is $n^{9/7}$ as shown by Eq.~\eqref{eq:overview_total_query} (See \cite{Jeffery_triangle} for why setting $r_1 = n^{4/7}$, and $r_2 = n^{5/7}, m = n^{3/7}$ below).
    A basis state $\ket{R_1} \ket{R_1'} \ket{D(R_1)}$ is marked if $|R_1\cap\triangle|=1$ and $|R_1'\cap\triangle|=1$.
    The query complexity is (with big $O$ omitted)
    \begin{align}
        c_0 &:= s_1 + \frac{1}{\sqrt{\epsilon_1}}( \sqrt{r_1} u_1 +c_1 ) \label{eq:overview_C_0_1st}\\
        &= r_1 r_2 + \sqrt{\frac{n}{r_1}}(\sqrt{r_1}(r_1+r_2) +c_1), \label{eq:overview_C_0_2nd}
    \end{align}
    where the value of $s_1$ and $u_1$ will be explained below.
    The checking operation $C_1$ which checks if $\ket{R_1} \ket{R_1'}$ is marked, is implemented in layer 2 (see also Section~\ref{subsec:make_layer_zero}).
    
    \item A quantum vertex-walk search on Johnson graph $G_2 = J(n_1,r_2)$, where $n_1 = n - r_1$, and we set $r_2 = n^{5/7}$.
    A basis state $\ket{R_1} \ket{R_2,y} \ket{G_{R_1,R_2}}$, where $R_2 \subseteq [n]-R_1$ and $y \in [n] -R_1 -R_2$,
    is marked if $|R_1\cap\triangle| = 1$, $|R_2\cap\triangle| = 1$ and $ y\notin \triangle$.
    To avoid the intolerable Setup cost $s_2 = r_1 r_2$ to construct $\ket{G_{R_1,R_2}}$, nested quantum walks with quantum data structure~\cite{Jeffery_triangle} are used.
    That is, the data $D(R_1)$ associated with $\ket{R_1}$ in layer 1 is the partial initial state (but $\ket{y}$ is not initialized) of layer 2:
    \begin{equation}\label{eq:overview_D_R_1}
        \ket{D(R_1)} = \frac{1}{\sqrt{\binom{n_1}{r_2}}} \sum_{R_2\subseteq[n]-R_1} \ket{R_2} \ket{G_{R_1,R_2}},
    \end{equation}
    where the data $\ket{G_{R_1,R_2}}$ stores $G_{R_1,R_2}$ in a $r_1 \times r_2$ matrix of registers.
    Therefore, to maintain this data structure,
    the setup cost in layer 1 is $s_1 = r_1 r_2$,
    and $u_1 = 2(r_1 + r_2)$ so as to implement $D(R_1) \mapsto D(R_1')$ based on $\ket{R_1'}\ket{R_1}$ (see Lemma~\ref{lem:layer_1_update}).
    Also, $s_2 = 0$, and $u_2 = 2 r_1$ so as to implement $G_{R_1,R_2} \mapsto G_{R_1,R_2'}$.
    Thus, the query complexity of this layer is
    \begin{align}
        c_1 &= s_2 + \frac{1}{\sqrt{\epsilon_2}}( \sqrt{r_2} u_2 +c_2 ) \label{eq:overview_C_1_1st} \\
        &= 0 + \sqrt{\frac{n}{r_2}} (\sqrt{r_2} r_1 + c_2). \label{eq:overview_C_1_2nd}
    \end{align}
    The checking operation $C_2$ which checks if $\ket{R_2,y}$ is marked, is implemented in layer 3.

    \item A Grover search on $[n]- R_1 -R_2 -y$ to find the last vertex $c\in\triangle$, assuming $R_1 \cap \triangle = a$ and $R_2 \cap \triangle = b$.
    The query complexity of this layer is
    \begin{equation}
        c_2 = \sqrt{n} c_3.
    \end{equation}
    The checking operation $C_3$ which checks if $z \in [n]- R_1 -R_2 -y$ is the last vertex $c\in\triangle$, is implemented in layer 4.

    \item A quantum vertex-walk search on the product Johnson graph $G_4 = J(r_1,m) \times J(r_2,m)$ with vertex set $V(G_4) = \{\ket{S_1}\ket{S_2} : S_i \subseteq R_i, |S_i| = m \}$.
    We set $m = (r_1 r_2)^{1/3} = n^{3/7}$.
    Vertices $\ket{S_1}\ket{S_2}$ and $\ket{S_1'}\ket{S_2'}$ are adjacent if $|S_i \cap S_i'|=m-1$ for $i=1,2$.
    The data associated with $\ket{S_i}$ is $G_{S_i,z}$.
    Assuming $R_1 \cap \triangle = a$, $R_2 \cap \triangle = b$ and $z=c$, vertex $\ket{S_1}\ket{S_2}$ is marked if $a\in S_1,b\in S_2$.
    The query complexity of this layer is
    \begin{align}
        c_3 &= s_4 + \frac{1}{\sqrt{\epsilon_4}}( \sqrt{m} u_4 +c_4 ) \\
        &= m + \frac{\sqrt{r_1 r_2}}{m}(\sqrt{m}\cdot O(1) +0).
    \end{align}
    The checking operation $C_4$ flips an auxiliary qubit if there exists $a\in S_1,b\in S_2$ such that $A_{a,b} +A_{b,c} +A_{a,c} = d \,(\mathrm{mod}\, M)$. Since $A_{a,b} \in G_{R_1,R_2}, A_{b,c} \in G_{S_1,c}, A_{a,c} \in G_{S_2,c}$ are already stored in the data structures, $C_4$ requires no oracle query.
\end{enumerate}

We can now calculate the total query complexity as follows:
\begin{align}
    c_3 &= (r_1 r_2)^{1/3} = n^{3/7}, \\
    c_2 &= \sqrt{n} c_3 =n^{6.5/7}, \\
    c_1 &= n^{1/7}(n^{6.5/7} +c_2) =n^{7.5/7}, \\
    c_0 &= n^{9/7} +n^{1.5/7}(n +c_1) =n^{9/7}. \label{eq:overview_total_query}
\end{align}

\subsection{Proof of Theorem~\ref{thm:main}}\label{subsec:make_layer_zero}
The condition that $R_1, R_2, y, z$ are mutually disjoint, and the promise that there's at most one target triangle, enable us to implement the checking operation $C_3$, $C_2$ and then $C_1$ with 100\% success probability successively as shown below.

\textbf{Implementing $C_3$ with 100\% success probability.}
We consider $C_3$ implemented in layer~4 first, whose checking operation $C_4$ does not rely on other layer.
The input state of this layer is $\ket{R_1}\ket{R_2}\ket{z}$, and the quantum walk basis state is $\ket{S_1}\ket{S_2}$ with $S_i \subseteq R_i$.
From the definition of the checking operation $C_4$, we can see that only when the input state $\ket{R_1}\ket{R_2}\ket{z}$ satisfies $|R_1\cap\triangle| = 1$, $|R_2\cap\triangle| = 1$ and $|z\cap\triangle| = 1$, dose $C_4$ not degenerates to the identity operator $I$.
In this special case, the quantum walk process $(U^{t_1} C_4)^{t_2} \ket{\psi_0}$ can be reduced to a $9$-dimensional invariant subspace as shown in Section~\ref{subsec:layer_4}.
Setting $t_1 = \Theta(\sqrt{m})$ and $t_2 = \Theta(\sqrt{r_1 r_2}/m)$,
the success amplitude $p = | \bra{t} (U^{t_1} C_4)^{t_2} \ket{\psi_0}| = \Omega(1)$ by Lemma~\ref{lem:layer_4_success},
where the target state $\ket{t}$ is an equal superposition of $\ket{S_1}\ket{S_2}$ with $a\in S_1, b\in S_2$ (assuming $a\in R_1, b\in R_2$),
and $p$ can be computed exactly beforehand.
Thus combined with Lemma~\ref{lem:long}, we can obtain $\ket{t}$ with certainty.
Denote by $\mathcal{A}_4$ the above process that on input state $\ket{R_1}\ket{R_2}\ket{z}$, implements $\ket{0} \mapsto \ket{t}$.
Then applying $C_4$ once more we can flip an auxiliary qubit with certainty.
Thus $\mathcal{A}_4^\dagger C_4 \mathcal{A}_4$ implements $C_3$ with 100\% success probability, and it acts nontrivially when $|R_1\cap\triangle| = 1$, $|R_2\cap\triangle| = 1$ and $|z\cap\triangle| = 1$.

\textbf{Implementing $C_2$ with 100\% success probability.}
The input state of layer 3 is $\ket{R_1}\ket{R_2,y}$, and the search space is $z\in [n] -R_1 -R_2 -y$.
Thus combined with the effect of $C_3$ shown above, we can see that only when the input state $\ket{R_1}\ket{R_2,y}$ satisfies $|R_1\cap\triangle| = 1$, $|R_2\cap\triangle| = 1$ and $y\notin\triangle$, does $C_3 \neq I$.
In this special case, suppose $a\in R_1, b\in R_2$, then the only target $t\in [n] -R_1 -R_2 -y$ is $c\in\triangle$.
Therefore, using the deterministic quantum search shown by Lemma~\ref{lem:long}, we can obtain $\ket{t}$.
Denote by $\mathcal{A}_3$ the above process that on input state $\ket{R_1}\ket{R_2,y}$, implements $\ket{0} \mapsto \ket{t}$.
Then applying $C_3$ once more we can flip an auxiliary qubit with certainty, and thus $\mathcal{A}_3^\dagger C_3 \mathcal{A}_4$ implements $C_2$ with 100\% success probability, which acts nontrivially when $|R_1\cap\triangle| = 1$, $|R_2\cap\triangle| = 1$ and $y\notin\triangle$.

\textbf{Implementing $C_1$ with 100\% success probability.} Recall that we want $C_1$ to act nontrivially on $\ket{R_1} \ket{R_1'} \ket{D(R_1)}$ when $|R_1\cap\triangle|=1$ and $|R_1'\cap\triangle|=1$.
We will implement $C_1$ from $\bar{C}_1$, which acts nontrivially on $\ket{R_1} \ket{D(R_1)}$ when $|R_1\cap\triangle|=1$.
The implementation of $C_1$ from $\bar{C}_1$ is described in Section~\ref{subsec:layer_1}, which requires $c_1 = 2u_1 + 4\bar{c}_1$ queries.
Since $u_1 = n^{5/7} \ll \bar{c_1} = n^{7.5/7}$, we have $c_1 = \Theta(\bar{c}_1)$.
We now describe how to implement $\bar{C}_1$.
The input state is $\ket{R_1}\ket{D(R_1)}$, and the quantum walk basis state is $\ket{R_2,y}$ with $R_2\subseteq [n]-R_1$, $y\in [n] -R_1 -R_2$.
From the effect of $C_2$ shown above, we can see that only when the input state satisfies $|R_1\cap\triangle| = 1$, does $C_2 \neq I$.
In this special case, $|([n] -R_1) \cap \triangle| = 2$, and thus the quantum walk search process satisfies Condition~\ref{cond:five} with $K = \triangle - R_1$ and $(j_0,l_0)=(1,0)$.
Therefore, the walk can be reduced to a 2-dimensional invariant subspace by Lemma~\ref{lem:eedp}, and then by Lemma~\ref{lem:beta_fixed} we can obtain with certainty the target state $\ket{t}$,
which is an equal superposition of $\ket{R_2,y}$ with $|R_2\cap\triangle| = 1$ and $y\notin\triangle$.
See Section~\ref{subsec:layer_2} for more details.
Denote by $\mathcal{A}_2$ the above process that on input state $\ket{R_1}\ket{D(R_1)}$, implements $\ket{0} \mapsto \ket{t}$.
Then applying $C_2$ once more we can flip an auxiliary qubit with certainty,
and thus $\mathcal{A}_2^\dagger C_2 \mathcal{A}_2$ implements $\bar{C}_1$ with 100\% success probability.

\begin{remark}\label{rem:layer2}
It's worth noting that in the implementation of $\bar{C}_1$ in layer 2, we cannot use Lemma~\ref{lem:long} to obtain with certainty the target state like in the implementation of $C_3$ in layer~4.
This is because in the implementation of the phase shift operator $S_{\psi_0}(\beta) = e^{-i\beta \ket{\psi_0} \bra{\psi_0}}$ in Lemma~\ref{lem:long}, we would otherwise need to query $r_1 \times r_2$ times to construct $\ket{G_{R_1,R_2}}$ in $\ket{\psi_0}$,
which is intolerable since $r_1 \times r_2 = n^{9/7}$ would then be multiplied by $1/\sqrt{\epsilon_1} = n^{1.5/7}$ of layer 1, making the query complexity exceed $n^{9/7}$.
\end{remark}

\textbf{Derandomizing layer 1 and finishing the proof of Theorem~\ref{thm:main}.}
As shown above, $C_1$ acts nontrivially only when basis state $\ket{R_1}\ket{R_1'}\ket{D(R_1)}$ satisfies $|R_1\cap\triangle| = 1$ and $|R_1'\cap\triangle| = 1$.
Therefore, the quantum walk search on $G_1 = J(n,r_1)$ can be reduced to a $10$-dimensional invariant subspace, as shown in Section~\ref{subsec:layer_1}.
Also, by setting $t_1 = \lfloor \frac{\pi}{2} \sqrt{2 r_1} \rceil$ and $t_2 = \lfloor \frac{\pi}{4} \sqrt{\frac{n}{3 r_1}} \rceil$,
the success amplitude $p = \left| \bra{t} (W^{t_1} C_1)^{t_2} \ket{\psi_0} \right| = \Omega(1)$ as shown by Lemma~\ref{lem:layer_1_success},
and $p$ can be computed exactly beforehand.
Thus combined with Lemma~\ref{lem:long}, we can obtain with certainty the target state $\ket{t}$,
which is an equal superposition of $\ket{R_1}$ satisfying $|R_1\cap\triangle| = 1$.

We can now apply the deterministic (quantum walk) search $\mathcal{A}_2, \mathcal{A}_3, \mathcal{A}_4$ of layers 2,3,4 successively and then measure all the registers.
This will lead to $\ket{R_1} \ket{R_2} \ket{z} \ket{S_1} \ket{S_2}$ satisfying $a\in S_1$, $b\in S_2$ and $z=c$.
Thus from the associated data $\ket{G_{R_1,R_2}} \ket{G_{S_1,z}} \ket{G_{S_2,z}}$ we can find the target $\triangle abc$ with certainty.
If there's no $a\in S_1, b\in S_2$ such that $A_{a,b} +A_{a,z} +A_{b,z} = d$, we claim that the graph $G$ does not contain the target triangle.


\section{Details of quantum walk search in different layer}\label{sec:each_layer}
In the following subsections, we will fill in the missing details of the three aforementioned quantum walk search on Johnson graphs:
(i) edge-walk on $G_1 = J(n,r_1)$ of layer 1,
(ii) vertex-walk on $G_2 = J(n_1,r_2)$ of layer 2,
and (iii) vertex-walk on the product Johnson graph $G_4 = J(r_1,m)\times J(r_2,m)$ of layer 4.
We start with the layer 4 whose checking operation does not rely on other layer.

\subsection{Layer 4}\label{subsec:layer_4}
Suppose the input state $\ket{R_1} \ket{R_2} \ket{G_{R_1,R_2}}\ket{z}$ of this layer satisfies $a\in R_1, b\in R_2, z=c$,
the target state of quantum vertex-walk search on $G_4 = J(r_1,m) \times J(r_2,m)$ is an equal superposition of $\ket{S_1}\ket{S_2}$ such that $a\in S_1, b\in S_2$.

\textbf{Setup.}
The initial state is
\begin{equation}
    \ket{\psi_0} = \bigotimes_{i=1}^{2}
    \frac{1}{\sqrt{\binom{r_i}{m}}} \sum_{S_i \subseteq R_i} \ket{S_i} \ket{G_{S_i,z}}
    \frac{1}{\sqrt{r_i-m}}\sum_{z_i\in R_i-S_i} \ket{z_i}.
\end{equation}
We first construct an equal superposition of $\ket{S_i}$ s.t. $S_i \subseteq R_i$ and $|S_i| = m$ based on $\ket{R_i}$,
and then query the oracle for $m$ times to construct the associated data $\ket{G_{S_i,z}}$ based on $\ket{S_i}\ket{z}$,
and finally construct an equal superposition of $\ket{z_i}$ s.t. $z_i \in R_i-S_i$ based on $\ket{R_i}\ket{S_i}$.

\textbf{Update.} A step of quantum walk $U$ consists of $2$ query operations and $2$ diffusion operations:
\begin{equation}
    U = Q \cdot U_B \cdot Q \cdot U_A,
\end{equation}
with working registers:
\begin{equation}
    \ket{z} \bigotimes_{i=1}^{2} \ket{R_i} \ket{S_i}\ket{z_i} \ket{G_{S_i,z}}\ket{G_{z_i,z}}.
\end{equation}
The first diffusion operation $U_A$ acts on registers $\bigotimes_{i=1}^{2} \ket{R_i}\ket{S_i}\ket{z_i}$, and can be seen as choosing a random $z_i \in R_i-S_i$ to be moved into $S_i$:
\begin{align}
U_A &= \sum_{R_1,R_2} \sum_{S_1\subseteq R_1} \sum_{S_2\subseteq R_2} \bigotimes_{i=1}^{2} \ket{R_i,S_i} \bra{R_i,S_i}
\otimes \Big( 2 \bigotimes_{i=1}^{2} \big( \ket{\varphi(R_i,S_i)} \bra{\varphi(S_i,R_i)} \big) - I \Big), \\
\ket{\varphi(S_i,R_i)} &=\frac{1}{\sqrt{r_i-m}}\sum_{z_i\in R_i-S_i} \ket{z_i}.
\end{align}
The sum $\sum_{R_1,R_2}$ is over $R_1 \subseteq [n]$, $R_2 \subseteq [n]-R_1$, and for other $R_2, S_1, S_2$ we define $U_A$ to act trivially on $\ket{z_1} \ket{z_2}$.
The query operation $Q$ calls the oracle $O$ (Eq.~\eqref{eq:oracle_triangle_sum}) on registers $\ket{z_i} \ket{z} \ket{G_{z_i,z}}$ for $i=1,2$.
The second diffusion operation $U_B$ acts on all registers except $\ket{z}$, and can be seen as choosing a random $z_i' \in S_i$ being removed from $S_i$ and at the same time moving $z_i$ into $S_i$, while updating the associated data $\ket{G_{S_i,z}} \ket{G_{z_i,z}}$ simultaneously:
\begin{align}
U_B &= \sum_{R_1,R_2} \ket{R_1,R_2}\bra{R_1,R_2} \otimes C_{R_1,R_2}, \\
C_{R_1,R_2} &= 2\bigotimes_{i=1}^{2} \left( \sum_{S_i + z_i\subseteq R_i}
\sum_{\bar{G}_i \in [M]^{m+1}}
\ket{\varphi_{S_i + z_i}^{\bar{G}_i}} \bra{\varphi_{S_i + z_i}^{\bar{G}_i}} \right) - I, \\
\ket{\varphi_{S_i + z_i}^{\bar{G}_i}} &= \frac{1}{\sqrt{m+1}} \sum_{z_i'\in S_i + z_i} \ket{S_i'} \ket{z_i'} \ket{G_{S_i',z}} \ket{G_{z_i',z}}. \label{eq:layer_4_phi_B}
\end{align}
In $\ket{\varphi_{S_i + z_i}^{\bar{G}_i}}$,
suppose the vertical juxtaposition $[G_{S_i,z};G_{z_i,z}] = \bar{G}_i$,
then for $\ket{S_i'} \ket{z_i'}$ s.t. $S_i' = S_i -z_i' +z_i$,
its associated data $[G_{S_i',z};G_{z_i',z}]$ is a corresponding permutation of $\bar{G}_i$:
we exchange row $z_i$ and $z_i'$ in $\bar{G}_i$,
and then sort the first $m$ rows in the ascending order.
Thus, if $\bar{G}_i$ is a sub-matrix of the edge weight matrix $A$, then $G_{S_i',z}$ is still a sub-matrix of $A$ with rows indexed by $S_i'$ and columns indexed by $z$.

\textbf{Checking.} The checking operation $C_4$ flips an auxiliary qubit if there exists $a\in S_1, b\in S_2$ such that $A_{a,b} +A_{b,z} +A_{a,z} = d$,
based on the data $\ket{G_{S_i,z}}$ constructed in layer 4, and $\ket{G_{R_1,R_2}}$ constructed in layer 1.
Thus no additional query is needed.
Using the phase kick-back effect: $X \ket{-} = -\ket{-}$, where $X$ is the Pauli-X matrix and $\ket{-} := (\ket{0}-\ket{1})/\sqrt{2}$, we can add $(-1)$ phase shift to the marked states.

\textbf{Invariant subspace.} Denote by $\ket{(j_1,j_2)-(k_1,k_2)}$ the equal superposition of states in $\{ \ket{S_1,z_1,S_2,z_2} : |S_i\cap \triangle|=j_i, |(S_i+z_i)\cap\triangle| =k_i \}$.
The $9$ basis states of the quantum walk's invariant subspace $\mathcal{H}_0$ is illustrated in Fig.~\ref{fig:layer_4_subspace}.

\begin{figure}
    \centering
    \includegraphics[width=0.75\textwidth]{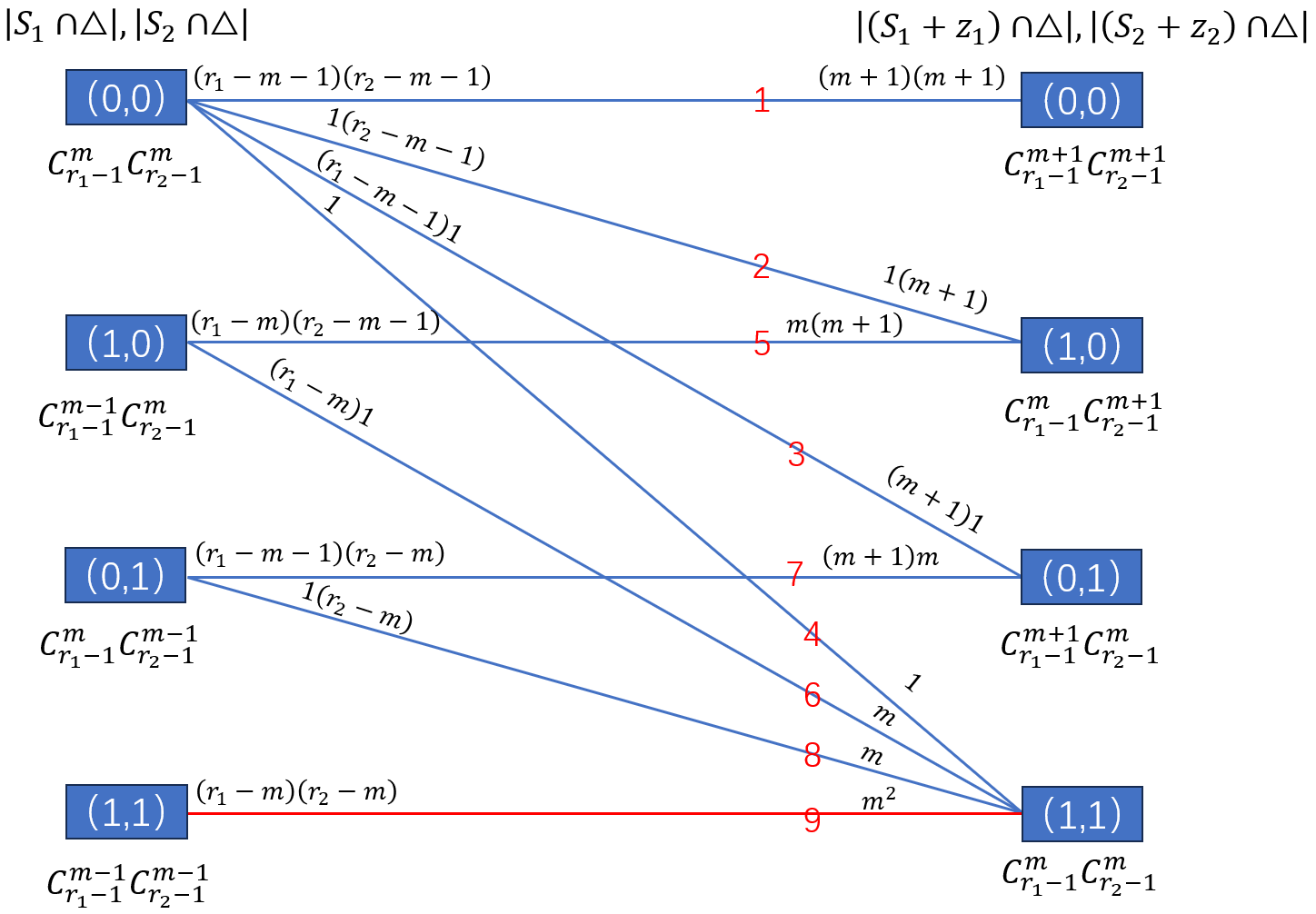}
    \caption{Illustration of the $9$ basis states $\ket{(j_1,j_2)-(k_1,k_2)}$ of the invariant subspace $\mathcal{H}_0$ of quantum walk search on $G_4 = J(r_1,m) \times J(r_2,m)$. }
    \label{fig:layer_4_subspace}
\end{figure}

It can be seen form Fig.~\ref{fig:layer_4_subspace} that a step of quantum walk $U$ takes the following matrix form in $\mathcal{H}_0$.
\begin{equation}\label{eq:W_layer4}
W=(2BB^\dagger-I)(2AA^\dagger-I),
\end{equation}
where non-negative matrices $A$ and $B$ satisfy:
\begin{equation}
A.^2=\begin{bmatrix}
(1-\frac{1}{r_1-m})(1-\frac{1}{r_2-m}) & 0 & 0 & 0 \\
\frac{1}{r_1-m} (1-\frac{1}{r_2-m}) & 0 & 0 & 0 \\
(1-\frac{1}{r_1-m})\frac{1}{r_2-m} & 0 & 0 & 0 \\
\frac{1}{(r_1-m)(r_2-m)} & 0 & 0 & 0 \\
0 & 1-\frac{1}{r_2-m} & 0 & 0 \\
0 & \frac{1}{r_2-m} & 0 & 0 \\
0 & 0 & 1-\frac{1}{r_1-m}  & 0 \\
0 & 0 & \frac{1}{r_1-m}  & 0 \\
0 & 0 & 0 & 1
\end{bmatrix},
B.^2=\begin{bmatrix}
1 & 0 & 0 & 0 \\
0 & \frac{1}{m+1} & 0 & 0 \\
0 & 0 & \frac{1}{m+1} & 0 \\
0 & 0 & 0 & \frac{1}{(m+1)^2} \\
0 & \frac{m}{m+1} & 0 & 0 \\
0 & 0 & 0 & \frac{m}{(m+1)^2} \\
0 & 0 & \frac{m}{m+1} & 0 \\
0 & 0 & 0 & \frac{m}{(m+1)^2} \\
0 & 0 & 0 & \frac{m^2}{(m+1)^2} \\
\end{bmatrix}
\end{equation}
From the number of basis states in $\ket{(j_1,j_2)-(k_1,k_2)}$ (or weights of the $9$ lines in Fig.~\ref{fig:layer_4_subspace}),
it's easy to see the initial state takes the following form in $\mathcal{H}_0$ after simplification:
\begin{equation}
    \ket{\psi_0}.^2 =
    \begin{bmatrix}
        (r_1 -m -1) (r_2 -m -1) \\
        (r_2 -m -1) \\
        (r_1 -m -1) \\
        1 \\
        m (r_2 -m -1) \\
        m \\
        (r_1 -m -1) m \\
        m \\
        m^2
    \end{bmatrix}
    \div (r_1 r_2),
\end{equation}
and the overlap with the target state $\ket{t} = \ket{e_9}$ is $\sqrt{\epsilon_4} = \frac{m}{\sqrt{r_1 r_2}}$.

\begin{lemma}\label{lem:layer_4_success}
    Setting $t_1 = \lfloor \frac{\pi}{2} \sqrt{\frac{m}{2}} \rceil$ and $t_2 = \lfloor \frac{\pi}{4} \frac{\sqrt{r_1 r_2}}{m} \rceil$, the success amplitude $p = \left| \bra{t} (W^{t_1} C_4)^{t_2} \ket{\psi_0} \right| $ satisfies $ p = 1 - O(\frac{m}{r_2} +\frac{m}{r_1} +\frac{1}{m})$.
\end{lemma}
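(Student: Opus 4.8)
The plan is to diagonalize the walk operator $W$ on the $9$-dimensional invariant subspace $\mathcal{H}_0$ of Section~\ref{subsec:layer_4}, exhibit $(W^{t_1}C_4)^{t_2}$ as a slightly perturbed generalized Grover iteration inside a small sub‑subspace, and then control the perturbation carefully. One convenient way to run the diagonalization is to write $\mathcal{H}_0 = \mathcal{H}_0^{(1)}\otimes\mathcal{H}_0^{(2)}$, where $\mathcal{H}_0^{(i)}$ is the $3$-dimensional space $\mathrm{span}\{\ket{0,0},\ket{0,1},\ket{1,1}\}$ attached to a single Johnson graph $J(r_i,m)$ carrying one marked vertex; then $AA^\dagger = \Pi_A^{(1)}\otimes\Pi_A^{(2)}$ and $BB^\dagger = \Pi_B^{(1)}\otimes\Pi_B^{(2)}$ with $\Pi_A^{(i)},\Pi_B^{(i)}$ the rank-$2$ projectors read off from $A.^2$, $B.^2$. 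In each factor $\mathrm{col}\,\Pi_A^{(i)}$ and $\mathrm{col}\,\Pi_B^{(i)}$ share the single line $\ket{\psi_0^{(i)}}$ and otherwise make a principal angle $\phi_i$ with $\cos\phi_i=\sqrt{m(r_i-m-1)/((r_i-m)(m+1))}$; since cosines of principal angles multiply under tensor products, the pair $(\mathrm{col}\,AA^\dagger,\mathrm{col}\,BB^\dagger)$ has principal angles $0,\phi_1,\phi_2,\phi_{12}$ with $\cos\phi_{12}=\cos\phi_1\cos\phi_2$, so $W$ has eigenphase $0$ on $\ket{\psi_0}=\ket{\psi_0^{(1)}}\otimes\ket{\psi_0^{(2)}}$ (fixed by both diffusions) plus two more zero-phase directions, and eigenphases $\pm2\phi_1,\pm2\phi_2,\pm2\phi_{12}$ on three $2$-dimensional blocks.

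The numerical facts I would extract are: $\phi_1,\phi_2=\Theta(1/\sqrt m)$ with $2t_1\phi_1,2t_1\phi_2$ bounded away from integer multiples of $\pi$; and $2t_1\phi_{12}=\pi+\xi$, where $\xi=O(1/\sqrt m)$, built from an intrinsic $O(1/m+m/r_1+m/r_2)$ part and an $O(1/\sqrt m)$ part due to rounding $t_1=\lfloor\tfrac{\pi}{2}\sqrt{m/2}\rceil$ to an integer. I would also expand the target $\ket{t}=\ket{e_9}=\ket{1,1}^{(1)}\otimes\ket{1,1}^{(2)}$ in this eigenbasis: using $\ket{1,1}^{(i)}=\sqrt{m/r_i}\,\ket{\psi_0^{(i)}}+\sqrt{1-m/r_i}\,\ket{w^{(i)}}$ gives $\ket{e_9}=\sqrt{\epsilon_4}\,\ket{\psi_0}+\Theta(1)\,\ket{f_{12}}+\Theta(\sqrt{m/r_2})\,\ket{f_1}+\Theta(\sqrt{m/r_1})\,\ket{f_2}$ with $\ket{f_j}$ a unit vector in the $\phi_j$-block, and no component on the zero-phase directions other than $\ket{\psi_0}$. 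Since $W^{t_1}$ fixes $\ket{\psi_0}$, acts as $-I$ composed with a rotation by $\xi$ on the $\phi_{12}$-block, and acts by a fixed explicit rotation on the $\phi_1,\phi_2$-blocks, the operator $W^{t_1}C_4$ restricted to $\mathrm{span}(\ket{\psi_0})\oplus(\phi_{12}\text{-block})$ equals, up to these small corrections, the generalized Grover iterate $(2\ket{\psi_0}\bra{\psi_0}-I)(I-2\ket{g}\bra{g})$ with $\langle\psi_0|g\rangle=\sqrt{\epsilon_4}\,(1+O(m/r_1+m/r_2))$. With $t_2=\lfloor\tfrac{\pi}{4}\sqrt{r_1r_2}/m\rceil\approx\tfrac{\pi}{4\arcsin\sqrt{\epsilon_4}}$ the ideal iteration rotates $\ket{\psi_0}$ onto the $\phi_{12}$-component of $\ket{e_9}$, yielding $|\bra{t}(W^{t_1}C_4)^{t_2}\ket{\psi_0}|=1-O(m/r_1+m/r_2+1/m+\epsilon_4)$; and because every angle above is an explicit function of $n,r_1,r_2,m$, $p$ is computable in advance.

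The main obstacle is the error bookkeeping, in particular showing that the $O(1/\sqrt m)$ imperfection $\xi$ of the walk reflection does not spoil $p$: a naive operator-norm bound costs $O(t_2\,\xi)=O(\sqrt{r_1r_2}/m^{3/2})$, which for the chosen parameters is $\Theta(1)$ and useless. The point I would make precise is that this imperfection leaks amplitude only into the direction of the $\phi_{12}$-block orthogonal to $\ket{f_{12}}$, which is orthogonal to both $\ket{\psi_0}$ and $\ket{e_9}$; on it $C_4$ acts as the identity and $W^{t_1}$ acts as $-I$ up to $O(\xi)$, so the per-step contributions $\approx\xi\sin(2k\arcsin\sqrt{\epsilon_4})$ enter with alternating signs and telescope to a bounded $O(\xi)$ total rather than $O(t_2\xi)$, affecting $1-p^2$ only at order $\xi^2=O(1/m)$. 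In parallel, the $O(\sqrt{m/r_i})$ components of $\ket{e_9}$ lying in the $\phi_1,\phi_2$-blocks, which $W^{t_1}$ does not reflect, must be shown to cost only $O(m/r_1+m/r_2)$, while the $\epsilon_4$-level and $t_2$-rounding errors of the ideal two-dimensional Grover rotation are $O(\epsilon_4)=O(m^2/(r_1r_2))$ and absorbed. Carrying these three estimates through, and checking that the constants $\tfrac{\pi}{2}\sqrt{m/2}$ and $\tfrac{\pi}{4}\sqrt{r_1r_2}/m$ indeed make the leading corrections come out as $O(1/m)$ and $O(m/r_i)$, finishes the proof; this is the same mechanism as in the element-distinctness analysis of Ref.~\cite{eedp}, adapted here to the product Johnson graph.
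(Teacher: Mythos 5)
Your proposal is correct in substance, arrives at the same error bound $1-O(1/m+m/r_1+m/r_2)$, and is in fact more explicit than the paper's own argument, which simply invokes analogues of Lemmas 2 and 3 of Ref.~\cite{subset}: the paper asserts that $W$ has two principal eigenvectors $\ket{u_\pm}$ with eigenphases $\pm\varphi$, $\varphi=\tfrac{2\sqrt{2}}{\sqrt{m}}(1+o(1))$, and overlap $\approx 1/\sqrt{2}$ with $\ket{t}$; that $W^{t_1}C_4$ consequently has two eigenvectors $\ket{\theta_\pm}$ with eigenphases $\pm\theta$, $\theta\approx 2m/\sqrt{r_1r_2}$, and overlaps $\approx 1/\sqrt{2}$ with $\ket{t}$ and $\pm i/\sqrt{2}$ with $\ket{\psi_0}$, all up to $\delta=O(1/m+m/r_1+m/r_2)$; and then finishes with the triangle inequality $p(t_2)\ge\sin(t_2\theta)-O(\delta)$. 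Your tensor-product diagonalization --- writing $\mathcal{H}_0=\mathcal{H}_0^{(1)}\otimes\mathcal{H}_0^{(2)}$, multiplying cosines of principal angles to get $\cos\phi_{12}=\cos\phi_1\cos\phi_2$ (your formula for $\cos\phi_i$ checks out against $A.^2$ and $B.^2$), and expanding $\ket{e_9}$ in the resulting eigenbasis --- is a clean, self-contained derivation of exactly those spectral facts for the product Johnson graph, and it makes the advance computability of $p$ (needed to apply Lemma~\ref{lem:long}) transparent. You also correctly isolate the one genuinely delicate point, the $O(1/\sqrt{m})$ imperfection $\xi$ of the reflection $W^{t_1}$, for which a naive norm bound over $t_2$ steps is useless; my only suggestion is to phrase your ``telescoping'' step spectrally rather than dynamically, since that is how it is most easily made rigorous: the leaked direction $\ket{f^\perp}$ carries eigenvalue $\approx -1$ of $W^{t_1}C_4$, far from the relevant eigenvalues $e^{\pm i\theta}\approx 1$, so the first-order eigenvector perturbation lies entirely along $\ket{f^\perp}$, which is orthogonal to both $\ket{\psi_0}$ and $\ket{t}$, and the overlaps and eigenphase therefore shift only at order $\xi^2=O(1/m)$ --- precisely the non-resonance your alternating-sign heuristic is capturing, and precisely what the paper's $\delta$ silently absorbs. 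Your remaining checks ($2t_1\phi_i\approx\pi/\sqrt{2}$ bounded away from multiples of $\pi$ so the $\phi_1,\phi_2$ blocks contribute only $O(m/r_i)$; the $t_2$-rounding contributes $O(\epsilon_4)$, which is dominated by $O(m/r_1)$) are the right ones and complete the argument.
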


\begin{proof}
    Similar to \cite[Lemma 2]{subset}, it can be shown that $W$ has two eigenvectors $\ket{u_\pm}$ with corresponding eigenvalues $e^{\pm i \varphi}$ such that
    \begin{align}
        \left| \braket{t|u_{\pm}} \right| &= \frac{1}{\sqrt{2}} + O(\frac{1}{m} +\frac{m}{r_1} +\frac{m}{r_2}), \\
        \varphi &= \frac{2\sqrt{2}}{\sqrt{m}}(1+O(\frac{1}{\sqrt{m}} +\frac{m}{r_1} +\frac{m}{r_1})).
    \end{align}
    Similar to \cite[Lemma 3]{subset},
    it can be shown that when $t_1 \varphi \approx \pi$ and $C_4 = 2\ket{t}\bra{t} - I$ adds relative phase shift $(-1)$ to the only target basis state $\ket{t}$ in $\mathcal{H}_0$,
    $W^{t_1}C_4$ has two eigenvectors $\ket{\theta_\pm}$ with corresponding eigenvalues $e^{\pm i\theta}$ such that
    \begin{align}
        \braket{t|\theta_{\pm}} &= \frac{1}{\sqrt{2}} +\delta, \quad
        \braket{\psi_0|\theta_{\pm}} = \pm\frac{i}{\sqrt{2}} +\delta, \\
        \theta &= \frac{2m}{\sqrt{r_1 r_2}}(1+\delta),   
    \end{align}
    where $\delta = O(\frac{m}{r_2} +\frac{m}{r_1} +\frac{1}{m})$.
    Consider $p(t_2) = \left| \bra{t} (W^{t_1} C_4)^{t_2} \ket{\psi_0} \right|$.
    Let $\Pi_{\pm} :=\ket{\theta_+}\bra{\theta_+} + \ket{\theta_-}\bra{\theta_-}$, and $\Pi_j$ be the projection onto the other $7$ eigenvectors of $W^{t_1}C_4$. Then $\| \Pi_{\pm} \ket{\psi_0}\| = 1-\delta$, and $\left| \bra{t} \Pi_j \ket{\psi_0} \right| \leq \| \Pi_{j} \ket{\psi_0}\| = \delta$.
    Therefore,
    \begin{align}
        p(t_2) &\geq \left| e^{it_2\theta} \braket{t|\theta_+} \braket{\theta_+|\psi_0} + e^{-it_2\theta} \braket{t|\theta_-} \braket{\theta_-|\psi_0} \right| - \sum_{j} \left| \bra{t} \Pi_j \ket{\psi_0} \right| \\
        &\geq \left|e^{it_2\theta} \frac{-i}{2} (1+\delta) + e^{-it_2\theta} \frac{i}{2} (1+\delta) \right| -7\delta \\
        &= \sin(t_2 \theta) -O(\delta).
    \end{align}
    Setting $t_2 = \lfloor \frac{\pi}{4} \frac{\sqrt{r_1 r_2}}{m} \rceil$ we have $p(t_2) = 1 -\delta$ which completes the proof.
\end{proof}

\subsection{Layer 2}\label{subsec:layer_2}
Suppose the input state $\ket{R_1} \ket{D(R_1)}$ (see Eq.~\eqref{eq:overview_D_R_1} for $D(R_1)$) of this layer satisfies $|R_1 \cap \triangle| = 1$, the target state of quantum edge-walk search on $G_2 = J(n_1,r_2)$ is an equal superposition of $\ket{R_2,y}$ such that $|R_2 \cap \triangle| = 1,y\notin\triangle$.

\textbf{Setup.}
The initial state is
\begin{equation}
      \ket{\psi_0} = \ket{R_1} \ket{D(R_1)}
      \frac{1}{\sqrt{n_2}} \sum_{y} \ket{y},
\end{equation}
where $\ket{D(R_1)} = \frac{1}{\sqrt{\binom{n_1}{r_2}}} \sum_{R_2\subseteq[n]-R_1} \ket{R_2} \ket{G_{R_1,R_2}}$.
We only need to construct an equal superposition of $y\in[n]-R_1-R_2$ based on $\ket{R_1}\ket{R_2}$, which requires no query.

\textbf{Update.} A step of quantum walk $U$ consists of $2$ query operations and $2$ diffusion operations:
\begin{equation}
    U = Q \cdot U_B(\theta_2) \cdot Q \cdot U_A(\theta_1),
\end{equation}
with working registers:
\begin{equation}
    \ket{R_1} \ket{R_2} \ket{G_{R_1,R_2}} \ket{y} \ket{G_{R_1,y}}.
\end{equation}
The first diffusion operation $U_A(\theta_1)$ acts on registers $\ket{R_1} \ket{R_2} \ket{y}$, and can be seen as choosing a random $y\in [n] -R_1 -R_2$ to be moved in to $R_2$:
\begin{align}
U_A(\theta_1) &= \sum_{R_1\cap R_2=\phi} \ket{R_1,R_2}\bra{R_1,R_2} \otimes C_{R_1,R_2}(\theta_1), \\
C_{R_1,R_2}(\theta_1) &= I - (1-e^{i\theta_1})\ket{\varphi(R_1,R_2)} \bra{\varphi(R_1,R_2)}, \\
\ket{\varphi(R_1,R_2)} &= \frac{1}{\sqrt{n_2}}\sum_{y\in[n]-R_1-R_2} \ket{y}
\end{align}
The query operation $Q$ calls the oracle $O$ (Eq.~\eqref{eq:oracle_triangle_sum}) for $r_1$ times to update the data $G_{R_1,y}$ associated with $\ket{R_1} \ket{y}$.
The second diffusion operation $U_B(\theta_2)$ acts on all registers, and can be seen as choosing a random $y'\in R_2$ being removed from $R_2$ and at the same time moving $y$ into $R_2$, while updating the associated data $\ket{G_{R_1,R_2}}\ket{G_{R_1,y}}$ simultaneously:
\begin{align}
U_B(\theta_2) &= \sum_{R_1} \ket{R_1}\bra{R_1} \otimes C_{R_1}(\theta_2), \\
C_{R_1}(\theta_2) &= I - (1-e^{i\theta_2})
\sum_{R_2+y\subseteq [n]-R_1}
\sum_{\bar{G} \in [M]^{r_1\times(r_2+1)}}
\ket{\varphi_{R_2+y}^{\bar{G}}} \bra{\varphi_{R_2+y}^{\bar{G}}}, \\
\ket{\varphi_{R_2+y}^{\bar{G}}} &= \frac{1}{\sqrt{r_2+1}} \sum_{y'\in R_2+y} \ket{R_2'} \ket{y'} \ket{G_{R_1,R_2'}} \ket{G_{R_1,y'}}.
\end{align}
In $\ket{\varphi_{R_2+y}^{\bar{G}}}$, suppose the horizontal juxtaposition $[G_{R_1,R_2}, G_{R_1,y}]$ equals $\bar{G}$, then for $\ket{R_2'} \ket{y'}$ s.t. $R_2' = R_2 - y' +y$, its associated data $[G_{R_1,R_2'}, G_{R_1,y'}]$ is an appropriate permutation of the columns of $\bar{G}$.
This is similar to $\ket{\varphi_{S_i + z_i}^{\bar{G}_i}}$ defined in Eq.~\eqref{eq:layer_4_phi_B}.

\textbf{Checking.} The checking operator $S_\mathcal{M}(\alpha) = C_2 (I\otimes \mathrm{diag}(1,e^{i\alpha})) C_2$ adds phase shift $e^{i\alpha}$ to states $\ket{R_1} \ket{R_2} \ket{G_{R_1,R_2}} \ket{y}$ which satisfies $|R_1 \cap \triangle| = 1$, $|R_2 \cap \triangle| = 1$ and $y\notin\triangle$.

Assume $|R_1 \cap \triangle| = 1$,
then $|([n] -R_1) \cap \triangle| = 2$.
Therefore, the quantum vertex-walk search on $G_2 = J(n_1,r_2)$ satisfies Condition~\ref{cond:five} with $K = \triangle - R_1$ and $(j_0,l_0) = (1,0)$, and we can obtain the same $5$-dimensional invariant subspace as shown by Fig.~\ref{fig:pre_subspace} in Section~\ref{subsec:pre_vertex},
but now $N := n_1$, $r := r_2$, $R := R_2$, and $K := \triangle - R_1$.
Since the target state is now $\ket{t} = \ket{1,0}$, the proportion of marked states becomes
\begin{align}
    \epsilon_2 &= \frac{2 \binom{n_1-2}{r_2-1} (n_1-r_2-1)}{\binom{n_1}{r_2}(n_1-r_2)} \\
    &= 2\frac{r_2}{n_1}(1-\frac{r_2}{n_1-1})
    =\Theta(\frac{r_2}{n}).
\end{align}
Thus by combing Lemma~\ref{lem:eedp} (where now $t \in O(\sqrt{r_2})$) and Lemma~\ref{lem:beta_fixed} (let $S_{\psi_0}(-\beta) = U^{t}$ and $\lambda = \epsilon_2$),
we can obtain $\ket{t}$ with certainty, and the query complexity is $0 + \sqrt{{n}/{r_2}} (\sqrt{r_2} r_1 + c_2)$, same as Eq.~\eqref{eq:overview_C_1_2nd}.

\subsection{Layer 1}\label{subsec:layer_1}
The goal of this layer is to construct an equal superposition of $\ket{R_1}\ket{R_1'}$ such that $|R_1\cap\triangle|=1$ and $|R_1'\cap\triangle|=1$.
The quantum edge-walk search on $G_1 = J(n,r_1)$ is the same as in Section~\ref{subsec:pre_edge}.
We first describe how to implement the update operator in the following lemma.

\begin{lemma}\label{lem:layer_1_update}
    The `Data' operator that transform the data $D(R_1)$ to $D(R_1')$ based on $\ket{R_1'}\ket{R_1}$ requires $2(r_1 +r_2)$ queries.
\end{lemma}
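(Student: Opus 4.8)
The plan is to show that the data transformation $D(R_1)\mapsto D(R_1')$, where $R_1'=R_1-x+x'$ for some $x\in R_1$ and $x'\notin R_1$, and
\[
\ket{D(R_1)} = \frac{1}{\sqrt{\binom{n_1}{r_2}}} \sum_{R_2\subseteq[n]-R_1} \ket{R_2}\ket{G_{R_1,R_2}},
\]
can be realized in three conceptual stages, each costing $O(r_2)$ or $O(r_1)$ queries. First, conditioned on knowing both $R_1$ and $R_1'$ (available since the `Data' operator acts on $\ket{R_1'}\ket{R_1}$), I would rename/reindex the summation variable: the index set $[n]-R_1$ and $[n]-R_1'$ differ only in that $x'$ is replaced by $x$, so a controlled permutation of the $R_2$-register (a classical relabeling, no queries) brings the superposition into the form $\frac{1}{\sqrt{\binom{n_1}{r_2}}}\sum_{R_2\subseteq[n]-R_1'}\ket{R_2}\ket{G_{R_1,R_2}}$ — that is, the row-index set is still $R_1$ but the column sets now range over subsets of $[n]-R_1'$. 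The only subtlety is the branches where $x'\in R_2$ (in the old labeling): there the relabeled column set contains $x$, and the stored matrix $G_{R_1,R_2}$ has a column indexed by $x'$ that must become a column indexed by $x$; I will handle this together with the next stage.

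Second, I would fix the row index set: we must convert each stored submatrix $G_{R_1,R_2}$ (rows indexed by $R_1$) into $G_{R_1',R_2}$ (rows indexed by $R_1'=R_1-x+x'$). Since $R_1$ and $R_1'$ differ in exactly one element, this amounts to deleting the row corresponding to $x$ and inserting the row $\big(A_{x',j}\big)_{j\in R_2}$. Uncomputing the old row $\big(A_{x,j}\big)_{j\in R_2}$ costs $r_2$ queries (query $O$ on $\ket{x,j}$ for each $j\in R_2$ to XOR the stored value back to zero — we keep $x$ and $R_2$ in registers, so this is clean), and computing the new row $\big(A_{x',j}\big)_{j\in R_2}$ costs another $r_2$ queries. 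After sorting the row labels back into ascending order (a query-free permutation of the matrix-of-registers), the data is now $\frac{1}{\sqrt{\binom{n_1}{r_2}}}\sum_{R_2\subseteq[n]-R_1'}\ket{R_2}\ket{G_{R_1',R_2}} = \ket{D(R_1')}$. The leftover issue from the first stage — branches where the column set now contains $x$ with a stale column $\big(A_{x',j}\big)$ wait, $\big(A_{i,x'}\big)_{i\in R_1}$ stored — is fixed analogously by uncomputing that stale column ($r_1$ queries) and recomputing the correct column $\big(A_{i,x}\big)_{i\in R_1'}$ ($r_1$ queries), then re-sorting the column labels.

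Summing the costs: $2r_2$ for the row swap plus $2r_1$ for the column fix gives $2(r_1+r_2)$ queries, matching the claim; all the remaining operations (controlled relabelings of the $R_2$-index register, sorting rows and columns of the register-matrix into canonical order) are unitary and query-free. I expect the main obstacle to be a careful bookkeeping argument that these relabeling permutations are well-defined unitaries on the full superposition and that the column-fixing step interacts correctly with the row-fixing step regardless of whether $x$ or $x'$ lies in $R_2$ — in other words, verifying that one genuinely lands on the canonical form $\ket{D(R_1')}$ with no residual garbage or phase, rather than merely counting queries. A clean way to organize this is to treat the four cases ($x\in R_2$ or not, crossed with $x'\in R_2$ or not) separately and check each maps to the intended branch of $\ket{D(R_1')}$; since $|R_2\cap\{x,x'\}|\le 1$ cannot both hold in the relevant labelings, only two of the four cases are non-vacuous, keeping the verification short.
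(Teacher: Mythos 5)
Your proposal is correct and follows essentially the same route as the paper: both decompose the update into a one-column fix costing $2r_1$ queries and a one-row fix costing $2r_2$ queries, each performed by uncomputing the stale entries via the oracle and querying the new ones, with query-free controlled permutations restoring the canonical (sorted) ordering of the register matrix. The only difference is that the paper performs the column fix first (its $U_{1,1}$, handling the branches with $x_1'\in R_2$) and the row fix second (its $U_{1,2}$), so the row update always acts on correctly labelled data; this ordering cleanly dissolves the stale-column interaction you flag at the end of your argument.
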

\begin{proof}
    The transformation consists of the following two steps:
    \begin{align}
        \ket{D(R_1)} =& \frac{1}{\sqrt{\binom{n_1}{r_2}}} \sum_{R_2 \cap R_1 = \phi} \ket{R_2} \ket{G_{R_1,R_2}} \nonumber\\
        U_{1,1} \mapsto & \frac{1}{\sqrt{\binom{n_1}{r_2}}} \sum_{R_2' \cap R_1' = \phi} \ket{R_2'} \ket{G_{R_1,R_2'}} \\
        U_{1,2} \mapsto & \frac{1}{\sqrt{\binom{n_1}{r_2}}} \sum_{R_2' \cap R_1' = \phi} \ket{R_2'} \ket{G_{R_1',R_2'}} = \ket{D(R_1')}.
    \end{align}
    Specifically, let $x_1 := R_1\setminus R_1'$ and $x_1' = R_1'\setminus R_1$, which can be calculated from $\ket{R_1} \ket{R_1'}$.
    Consider some $R_2 \subseteq [n] -R_1$.
    If $x_1' \notin R_2$, then $U_{1,1}$ keeps $R_2$ unchanged;
    if $x_1' \in R_2$, then $U_{1,1}$ implements $R_2 \mapsto R_2' =  R_2 - x_1' + x_1$ and also update the data $\ket{G_{R_1,R_2}}$ simultaneously: it clears the data $G_{R_1,x_1'}$ in column $x_1'$, and then writes back $G_{R_1,x_1}$, and finally permutes the $r_2$ columns so that the indexes $R_2' =  R_2 - x_1' + x_1$ remain the ascending order.
    Therefore, $U_{1,1}$ requires $2r_1$ queries.
    
    Operator $U_{1,2}$ does not need to deal with separate cases, and is simpler to implement.
    It updates $G_{x_1,R_2'}$ to $G_{x_1',R_2'}$ in row $x_1$, and then permutes the $r_1$ rows so that the indexes $R_1' = R_1 -x_1 +x_1'$ remain the ascending order.
    Therefore, $U_{1,2}$ requires $2r_2$ queries.
    The above process is illustrated in Fig.~\ref{fig:layer_1_update}.
    \begin{figure}
        \centering
        \includegraphics[width=0.6\linewidth]{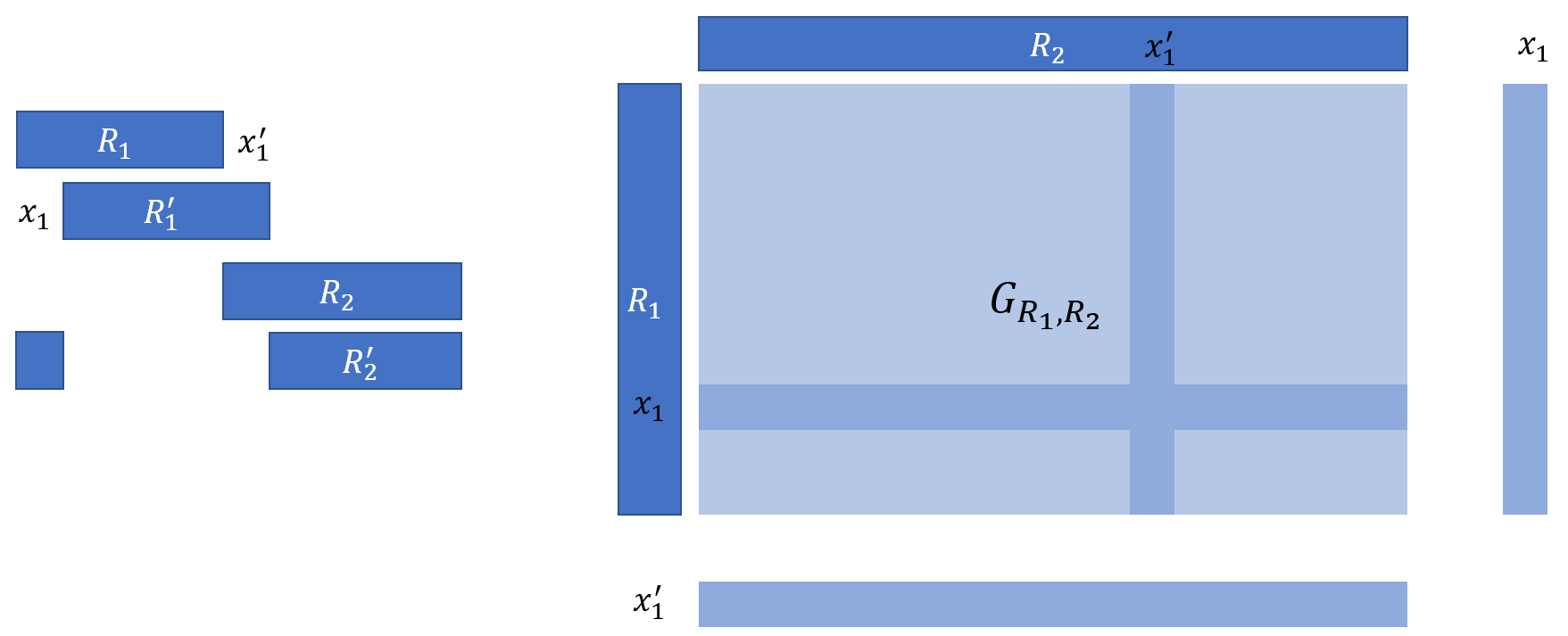}
        \caption{The update operator in layer 1 requires $2r_1 + 2r_2$ queries as shown in Lemma~\ref{lem:layer_1_update}.}
        \label{fig:layer_1_update}
    \end{figure}
\end{proof}

\textbf{Checking.} The checking operator $C_1$ adds phase shift to $\ket{R_1}\ket{R_1'} \ket{D(R_1)}$ such that $|R_1\cap\triangle|=1$ and $|R_1'\cap\triangle|=1$.
Suppose $\bar{C}_1$ implemented in layer 2 flips an auxiliary qubit based on $\ket{D(R_1)}$ if $|R_1\cap\triangle|=1$, then $C_1$ can be implemented with costs $c_1 = 2u_1 + 4\bar{c}_1$ as shown below:

1. apply $\bar{C}_1$ to $\ket{R_1}\ket{D(R_1)}$ and store the result on $\ket{b_1}$ initialized to $\ket{0}$; 
2. update the data $\ket{D(R_1)}$ to $\ket{D(R_1')}$ using Lemma~\ref{lem:layer_1_update}; 
3. apply $\bar{C}_1$ to $\ket{R_1'}\ket{D(R_1')}$ and store the result on $\ket{b_1'}$ initialized to $\ket{0}$;
4. use phase kick-back to add phase shift $(-1)$ if $b_1 \wedge b_1' = 1$;
5. clear the register $\ket{b_1}$, $\ket{b_1'}$ and recover $D(R_1)$ by applying the inverse of the first 3 steps.

\textbf{Invariant subspace.} Denote by $\ket{j,l}$ the equal superposition of states $\ket{R_1}\ket{R_1'}$ satisfying $\left| R_1 \cap \triangle \right| = j$ and $\left| R_1' \cap \triangle  \right| = l$.
The $10$ basis states of the quantum walk's invariant subspace $\mathcal{H}_0$ is illustrated in Fig.~\ref{fig:layer_1_subspace}.

\begin{figure}
    \centering
    \includegraphics[width=0.5\textwidth]{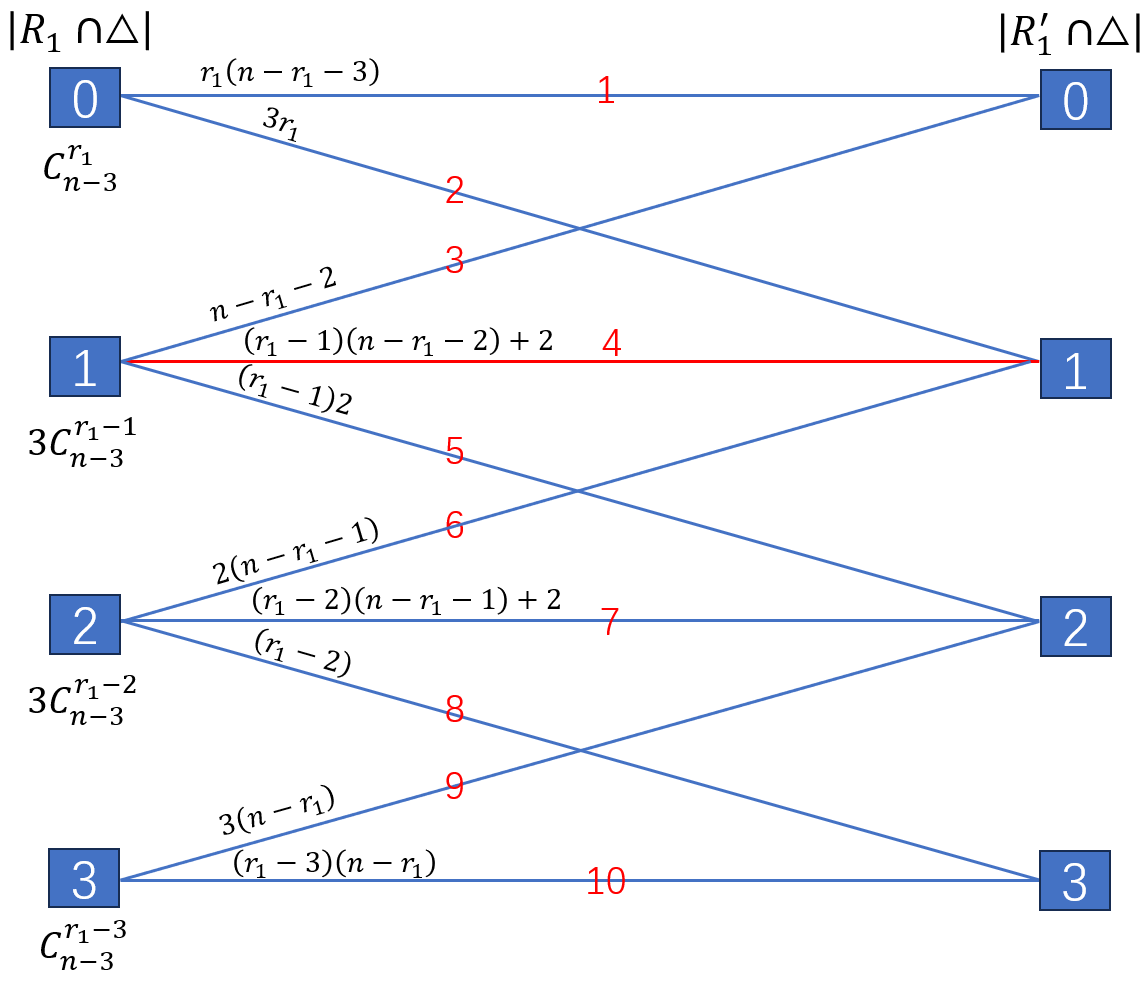}
    \caption{Illustration of the $10$ basis states $\ket{j,l}$ of the invariant subspace $\mathcal{H}_0$ of quantum walk search on $G_1 = J(n,r_1)$. }
    \label{fig:layer_1_subspace}
\end{figure}

It can be seen from Fig.~\ref{fig:layer_1_subspace} and the definition of a step of quantum walk $U$ which consists of `Coin' and `Swap' operators as shown in Section~\ref{subsec:pre_edge}, that $U$ takes the following matrix form $W$ in $\mathcal{H}_0$:
\begin{equation}
    W = S(2AA^\dagger-I),
\end{equation}
where $S = \mathrm{diag}(1,X,1,X,1,X,1)$, and non-negative matrix $A$ satisfies
\begin{equation}
A.^2 = \begin{bmatrix}
1 - \frac{3}{n-r} & 0 & 0 & 0 \\
\frac{3}{n-r} & 0 & 0 & 0 \\
0 & \frac{n-r-2}{r(n-r)} & 0 & 0 \\
0 & \frac{(r-1)(n-r-2)+2}{r(n-r)} & 0 & 0 \\
0 & \frac{2(r-1)}{r(n-r)} & 0 & 0 \\
0 & 0 & \frac{2(n-r-1)}{r(n-r)} & 0 \\
0 & 0 & \frac{(r-2)(n-r-1)+2}{r(n-r)} & 0 \\
0 & 0 & \frac{r-2}{r(n-r)} & 0 \\
0 & 0 & 0 & \frac{3}{r} \\
0 & 0 & 0 & 1-\frac{3}{r}
\end{bmatrix}.
\end{equation}
The initial state takes the following form in $\mathcal{H}_0$:
\begin{equation}
\ket{\psi_0}.^2 =
\begin{bmatrix}
(n-r_1-1)(n-r_1-2)(n-r_1-3) \\
3(n-r_1-1)(n-r_1-2) \\
3(n-r_1-1)(n-r_1-2) \\
3(n-r_1-1)((r_1-1)(n-r_1-2)+2) \\
6(r_1-1)(n-r_1-1) \\
6(r_1-1)(n-r_1-1) \\
3(r_1-1)((r_1-2)(n-r_1-1)+2) \\
3(r_1-1)(r_1-2) \\
3(r_1-1)(r_1-2) \\
(r_1-1)(r_1-2)(r_1-3)
\end{bmatrix}
\div (n(n-1)(n-2)).
\end{equation}
It can be seen that the overlap between $\ket{\psi_0}$ and the target state $\ket{t} = \ket{e_4}$ is $\sqrt{\epsilon_1} = \Theta(\sqrt{r_1/n})$.

\begin{lemma}\label{lem:layer_1_success}
    Setting $t_1 = \lfloor \frac{\pi}{2} \sqrt{2 r_1} \rceil$ and $t_2 = \lfloor \frac{\pi}{4} \sqrt{\frac{n}{3 r_1}} \rceil$, the success amplitude $p = \left| \bra{t} (W^{t_1} C_1)^{t_2} \ket{\psi_0} \right| $ satisfies $ p = 1 - O(\frac{1}{r_1} +\frac{r_1}{n})$.
\end{lemma}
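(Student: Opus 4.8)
The plan is to run the same spectral analysis used in the proof of Lemma~\ref{lem:layer_4_success}, now on the $10$-dimensional invariant subspace $\mathcal{H}_0$ of the edge-walk on $G_1=J(n,r_1)$, with the two relevant small parameters being $1/r_1$ (the inner-walk scale) and $r_1/n$ (the outer-search scale). \textbf{Step 1: spectrum of $W$.} First I would diagonalize $W=S(2AA^\dagger-I)$ on $\mathcal{H}_0$, viewing it as the product of the reflection $2AA^\dagger-I$ through $\mathrm{range}(A)$ and the involution $S=\mathrm{diag}(1,X,1,X,1,X,1)$, which swaps the three off-diagonal pairs $\ket{j,l}\leftrightarrow\ket{l,j}$ and fixes the four diagonal basis vectors. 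Splitting $\mathcal{H}_0$ into the $\pm1$-eigenspaces of $S$ decouples the problem: on the $(-1)$-eigenspace nothing relevant happens, and the rotation that matters lives in a low-dimensional block of the $(+1)$-eigenspace whose secular equation I would expand to leading order using the explicit entries of $A.^2$. In analogy with \cite[Lemma~2]{subset} and the element-distinctness walk of \cite{Ambainis07}, I expect a conjugate pair of eigenvectors $\ket{u_\pm}$ with eigenvalues $e^{\pm i\varphi}$ such that
\begin{equation}
\varphi=\sqrt{2/r_1}\,\bigl(1+O(1/\sqrt{r_1}+r_1/n)\bigr),\qquad |\braket{t|u_\pm}|=\tfrac{1}{\sqrt2}+O(1/r_1+r_1/n),
\end{equation}
with every other eigenvalue of $W$ on $\mathcal{H}_0$ bounded away from $1$ by an absolute constant.

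\textbf{Step 2: collapsing $W^{t_1}$ and analysing $W^{t_1}C_1$.} With $t_1=\lfloor\frac{\pi}{2}\sqrt{2r_1}\rceil$, so that $t_1\varphi=\pi+O(1/\sqrt{r_1})$, the power $W^{t_1}$ acts on $\mathrm{span}\{\ket{u_+},\ket{u_-}\}$ as $-I$ up to an $O(\delta)$ error (with $\delta:=O(1/r_1+r_1/n)$; the integer rounding of $t_1$ contributes only $O(\varphi)$), and on the complementary eigenvectors with phases bounded away from $0$. Since $\ket{t}=\ket{e_4}=\ket{1,1}$ is the unique marked basis vector inside $\mathcal{H}_0$ (Section~\ref{subsec:make_layer_zero}), $C_1$ restricted to $\mathcal{H}_0$ is, up to an irrelevant overall sign, the reflection $2\ket t\bra t-I$; hence, exactly as in the proof of Lemma~\ref{lem:layer_4_success}, $W^{t_1}C_1$ has a conjugate pair of eigenvectors $\ket{\theta_\pm}$ with eigenvalues $e^{\pm i\theta}$ obeying
\begin{equation}
\braket{t|\theta_\pm}=\tfrac{1}{\sqrt2}+\delta,\qquad \braket{\psi_0|\theta_\pm}=\pm\tfrac{i}{\sqrt2}+\delta,\qquad \theta=2\sqrt{\epsilon_1}\,(1+\delta),
\end{equation}
where $\sqrt{\epsilon_1}=\Theta(\sqrt{r_1/n})$ is read off from the stated $\ket{\psi_0}.^2$ vector (its fourth entry gives $\epsilon_1\sim 3r_1/n$). \textbf{Step 3: amplitude bound.} Writing $\Pi_\pm:=\ket{\theta_+}\bra{\theta_+}+\ket{\theta_-}\bra{\theta_-}$ and $\Pi_j$ for the projections onto the remaining eight eigenvectors of $W^{t_1}C_1$, with $\|\Pi_\pm\ket{\psi_0}\|=1-\delta$ and each $|\bra t\Pi_j\ket{\psi_0}|\le\delta$, the same triangle-inequality estimate as in Lemma~\ref{lem:layer_4_success} gives
\begin{equation}
p=|\bra t(W^{t_1}C_1)^{t_2}\ket{\psi_0}|\ \ge\ \sin(t_2\theta)-O(\delta),
\end{equation}
and choosing $t_2=\lfloor\frac{\pi}{4}\sqrt{n/(3r_1)}\rceil$ makes $t_2\theta=\tfrac{\pi}{2}+O(\delta)$, so $p=1-O(\delta)=1-O(1/r_1+r_1/n)$.

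The main obstacle is Step~1: unlike the layer-4 walk, the inner walk here is the edge-walk $W=S(2AA^\dagger-I)$ rather than a coin-reflection product, and $\mathcal{H}_0$ is larger, so isolating the near-$1$ eigenpair of $W$ and, in particular, pinning down the leading constant $\sqrt{2/r_1}$ in $\varphi$ (so that $t_1\varphi\to\pi$ with the prescribed $t_1$) requires an honest leading-order expansion of the characteristic polynomial in which the $1/r_1$ and $r_1/n$ corrections are tracked simultaneously. I would handle this through the block structure induced by $S$ described above: on the $(-1)$-eigenspace the analysis decouples, and the relevant piece reduces to a constant-size matrix whose eigenphase can be computed explicitly, much as in the element-distinctness analyses of \cite{Ambainis07,subset}.
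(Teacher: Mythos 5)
Your proposal follows essentially the same route as the paper's proof: identify the conjugate eigenpair $\ket{u_\pm}$ of $W$ with $\varphi \approx \sqrt{2/r_1}$ (by analogy with \cite[Lemma 2]{subset}), pass to the eigenpair $\ket{\theta_\pm}$ of $W^{t_1}C_1$ with $\theta \approx 2\sqrt{3r_1/n}$ (by analogy with \cite[Lemma 3]{subset}), and conclude with the same triangle-inequality bound $p \ge \sin(t_2\theta) - O(\delta)$. Your Step 1 is somewhat more explicit about how the spectral analysis of $W$ would be carried out than the paper, which simply defers to \cite{subset}, but the argument and all key quantities coincide.
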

\begin{proof}
    Similar to \cite[Lemma 2]{subset}, it can be shown that $W$ has two eigenvectors $\ket{u_\pm}$ with eigenvalues $e^{\pm i \varphi}$ such that
    \begin{align}
        \left| \braket{t|u_{\pm}} \right| &= \frac{1}{\sqrt{2}} +\delta, \\
        \varphi &= \sqrt{\frac{2}{r}} (1+\delta),
    \end{align}
    where $\delta := O(\frac{1}{r_1} +\frac{r_1}{n})$.
    To make $t_1 \varphi \approx \pi$, we set $t_1$ to be the nearest integer to $\frac{\pi}{2} \sqrt{2 r_1}$.
    Note that $C_1 = 2\ket{t}\bra{t} - I$ adds phase shift to the only target basis state $\ket{t} = \ket{e_4}$ in $\mathcal{H}_0$.
    Thus, similar to \cite[Lemma 3]{subset}, it can be shown that $W^{t_1}C_1$ has two eigenvectors $\ket{\theta_\pm}$ with eigenvalues $e^{\pm i\theta}$ such that
    \begin{align}
        \braket{t|\theta_{\pm}} &= \frac{1}{\sqrt{2}} +\delta, \quad
        \braket{\psi_0|\theta_{\pm}} = \pm\frac{i}{\sqrt{2}} +\delta, \\
        \theta &= 2 \sqrt{\frac{3r}{n}}(1+\delta).        
    \end{align}
    Consider $p(t_2) = \left| \bra{t} (W^{t_1} C_1)^{t_2} \ket{\psi_0} \right|$.
    Let $\Pi_{\pm} :=\ket{\theta_+}\bra{\theta_+} + \ket{\theta_-}\bra{\theta_-}$,
    and $\Pi_j$ be the projection onto the other $8$ eigenvectors of $W^{t_1}C_1$.
    Then $\| \Pi_{\pm} \ket{\psi_0}\| = 1-\delta$,
    and $\left| \bra{t} \Pi_j \ket{\psi_0} \right| \leq \| \Pi_{j} \ket{\psi_0}\| = \delta$.
    Therefore,
    \begin{align}
        p(t_2) &\geq \left| e^{it_2\theta} \braket{t|\theta_+} \braket{\theta_+|\psi_0} + e^{-it_2\theta} \braket{t|\theta_-} \braket{\theta_-|\psi_0} \right| - \sum_{j} \left| \bra{t} \Pi_j \ket{\psi_0} \right| \\
        &\geq \left|e^{it_2\theta} \frac{-i}{2} (1+\delta) + e^{-it_2\theta} \frac{i}{2} (1+\delta) \right| -8\delta \\
        &= \sin(t_2 \theta) -O(\delta).
    \end{align}
    Setting $t_2 = \lfloor \frac{\pi}{4} \sqrt{\frac{n}{3 r_1}} \rceil$,
    we have $p(t_2) = 1 -\delta$,
    which completes the proof.
\end{proof}

\section{Discussions}\label{sec:conclude}
In this paper, we have shown that there is a deterministic quantum algorithm for the triangle sum promised problem (i.e. the graph contains at most one target triangle) based on derandomization of a nested-quantum-walk-based algorithm by Jeffery et al.
Our algorithm achieves the same $O(n^{9/7})$ queries with the state-of-the-art bounded error quantum algorithm, utilizing several non-trivial techniques.
It may be worth  further considering the following problems.
\begin{enumerate}
    \item Will the lower bound of $\Omega(n^{9/7}/\sqrt{\log n})$ remains unchanged for the triangle sum promised problem considered in this paper?

    \item Is it still possible to design a deterministic quantum algorithm when the graph is promised to contain none or $k>1$ target triangles?

    \item Is it possible to derandomize the state-of-the-art $O(n^{5/4})$-query quantum algorithm for triangle finding~\cite{triangle_extended} when given an additional promise?
\end{enumerate}

\bibliographystyle{quantum}
\bibliography{ref}

\end{document}